\def\bR{\mathbb{R}}
\def\bN{\mathbb{N}}
\def\bZ{\mathbb{Z}}
\def\cC{\mathcal{C}}
\def\cM{\mathcal{M}}
\def\cB{\mathcal{B}}
\def\cV{\mathcal{V}}
\def\cO{\mathcal{O}}
\def\cF{\mathcal{F}}
\def\cG{\mathcal{G}}
\def\cL{\mathcal{L}}
\def\cJ{\mathcal{J}}
\def\cN{\mathcal{N}}
\def\cE{\mathcal{E}}
\def\cK{\mathcal{K}}
\def\cH{\mathcal{H}}
\def\eps{\varepsilon}
\def\ph{\varphi}
\def\indic{\hbox{\raise-2pt \hbox{\indbf 1}}}
\def\*{{\hfill\break\null\hfill\break}}
\def\tende#1{\,\vtop{\ialign{##\crcr\rightarrowfill\crcr
             \noalign{\kern-1pt\nointerlineskip}
             \hskip3.pt${\scriptstyle #1}$\hskip3.pt\crcr}}\,}
\def\otto{\,{\kern-1.truept\leftarrow\kern-5.truept\to\kern-1.truept}\,}
\def\tr{{\rm tr}}
\newtheorem{theorem}{Theorem}[section]  
\newtheorem{lemma}[theorem]{Lemma}
\numberwithin{equation}{section}
\def\be{\begin{equation}}
\def\ee{\end{equation}}
          \let\ph=\varphi   
        \let\L=\Lambda
\def \blue#1 {\textcolor{blue}{#1}}
\def \red#1 {\textcolor{red}{#1}}
\definecolor{lightblue}{rgb}{0, 0.33, 0.71}
\title{Bose gases in the Gross-Pitaevskii limit: \\ a survey of some rigorous results\footnote{Dedicated to Elliott Lieb on the occasion of his ninetieth birthday, with admiration.}} 
\author{Benjamin Schlein\footnote{Institute of Mathematics, University of Zurich, Winterthurerstrasse 190, 8057 Zurich.}} 
\begin{document} 

\maketitle

\begin{abstract}
We review some mathematical work on the Bose gas in the Gross-Pitaevskii regime. We start with the classical results by Lieb, Seiringer and Yngvason on the ground state energy and by Lieb and Seiringer on the existence of Bose-Einstein condensation. 
Afterwards, we discuss some more recent progress, based on a rigorous version of Bogoliubov theory. 
\end{abstract} 

\section{Trapped Bose gases: energy estimate and condensation} 


{\bf Bose-Einstein condensation for ideal gases.} The phenomenon of Bose-Einstein condensation has been first predicted by Einstein \cite{Ei} in 1924, based on previous work by Bose \cite{Bo} on the foundations of quantum statistical mechanics. Einstein considered the ideal Bose gas in a box $\Lambda_L = [0;L]^3$, with periodic boundary conditions. In the grand-canonical ensemble, the gas is described on the bosonic Fock space $\cF = \bigoplus_{n \geq 0} L^2 (\L_L)^{\otimes_s n}$, by the non-interacting Hamilton operator 
\begin{equation}\label{eq:HNideal} H = \sum_{p \in \L_L^*} p^2 a_p^* a_p \end{equation} 
where the sum runs over all momenta in $\L_L^* = (2\pi/L) \bZ^3$ and where $a_p^*, a_p$ denote the usual creation and annihilation operators on $\cF$, satisfying canonical commutation relations. The partition function of the gas at inverse temperature $\beta > 0$ and chemical potential $\mu$ is given by 
\[ Z (\beta, \mu) = \tr_\cF \; e^{-\beta (H - \mu \cN)} \]
where $\cN$ denotes the number of particles operator. Writing \[ H-\mu \cN = \sum_{p \in \L^*_L} (p^2 - \mu) a_p^* a_p \] 
and taking into account the fact that, for every momentum $p \in \L^*_L$, the spectrum of the harmonic oscillator $a_p^* a_p$ consists of all natural numbers, we find 
\[ Z (\beta ,\mu) = \prod_{p \in \L^*_L} \sum_{n =0}^\infty e^{- \beta n (p^2 - \mu)} = \prod_{p \in \L^*_L} \frac{1}{1-e^{-\beta (p^2 - \mu)}} \]
under the assumption that $\mu < 0$ (to make sure that the sum converges, for all $p \in \L_L^*$). The expected number of particles is then given by 
\[ \langle \cN \rangle = \frac{1}{Z (\beta, \mu)} \tr_\cF \, \cN e^{-\beta (H-\mu \cN)} = \frac{1}{\beta} \frac{d}{d\mu} \log Z (\beta, \mu) = \sum_{p \in \L_L^*} \frac{1}{e^{\beta (p^2 - \mu)} - 1} \, . \]
Hence, the chemical potential $\mu < 0$ can be fixed through the density 
\[ \rho = \frac{\langle \cN \rangle}{L^3} = \frac{1}{L^3} \sum_{p \in \L_L^*} \frac{1}{e^{\beta (p^2 - \mu)} -1} \, . \]
Let us now write $\rho = \rho_0 + \rho_+$, where
\begin{equation}\label{eq:rho0} \rho_0 = \frac{1}{L^3} \frac{1}{e^{-\beta \mu} - 1} \end{equation}
is the density of particles with zero momentum. In the limit $L \to \infty$, we can approximate 
\[ \rho_+ = \frac{1}{L^3} \sum_{p \in \L_L^* \backslash \{ 0 \}} \frac{1}{e^{\beta (p^2 - \mu)} -1} \simeq \frac{1}{(2\pi)^3} \int \frac{1}{e^{\beta (p^2 - \mu)} -1} dp = \frac{1}{(4 \pi \beta)^{3/2}} \sum_{n=1}^\infty \frac{e^{\beta \mu n}}{n^{3/2}} \, . \]
The r.h.s. of the last equation is increasing in $\mu \in (-\infty ; 0)$ and remains finite, as $\mu \to 0$. This implies that the density $\rho_+$ of particles with momentum $p \not = 0$ is always bounded by the critical density 
\[ \rho_c (\beta) =   \frac{1}{(4 \pi \beta)^{3/2}} \sum_{n=1}^\infty \frac{1}{n^{3/2}}  \, .\]
If $\rho > \rho_+ (\beta)$, we have to accommodate a macroscopic fraction of the particles in the state with zero momentum. In other words, we have to choose $\mu$ depending on $L$ with $\mu(L) \to 0$ so that, inserting in (\ref{eq:rho0}), we find $\rho_0 = \rho - \rho_c (\beta)$, in the limit $L \to \infty$ (we can choose $\mu(L) = - 1/ [\beta L^3 (\rho - \rho_c (\beta))]$). 

The accumulation of a macroscopic fraction of the particles in a single quantum state is known as Bose-Einstein condensation. For the ideal gas at inverse temperature $\beta > 0$, Bose-Einstein condensation takes place in the zero-momentum state whenever $\rho > \rho_c (\beta)$. Alternatively, we can invert the relation $\beta \to \rho_c (\beta)$ to define a critical temperature 
\[ \beta_c (\rho) = \frac{1}{4\pi \rho^{2/3}} \Big[ \sum_{n=1}^\infty \frac{1}{n^{3/2}} \Big]^{2/3}. \] 
For fixed density $\rho$, the ideal gas exhibits Bose-Einstein condensation at low temperature, when $\beta > \beta_c (\rho)$. At zero temperature (for $\beta = \infty$), we have complete condensation, ie. all particles are in the same zero-momentum state. 

\medskip

{\bf Condensation for interacting gases in the thermodynamic limit.} So far, we considered ideal gases. Establishing the existence of Bose-Einstein condensation for interacting Bose gases is much more challenging. In fact, the only available proof of condensation for interacting systems in the thermodynamic limit has been given by Dyson, Lieb and Simon in \cite{DLS}, for three-dimensional hard-core bosons on a lattice at half-filling, mapping the Bose gas to a spin system and using the tool of reflexion positivity that was previously introduced in \cite{FSS} (the approach of \cite{DLS} has been later extended by Kennedy, Lieb and Shastry in \cite{KLS} to any dimension greater than one and, more recently, by Aizenman, Lieb, Seiringer, Solovej and Yngvason in \cite{ALSSY} to certain models describing optical lattices). There has been, however, partial progress towards a better mathematical understanding of condensation for more general Bose gases in the thermodynamic limit through a renormalization group approach; see \cite{BFKT} for a review of recent results. 

\medskip

{\bf The Gross-Pitaevskii regime.} Since 1995, Bose-Einstein condensates have been realized in labs; see \cite{CW,K}. In typical experiments, a dilute Bose gas is initially trapped by strong magnetic fields. Then, the system is cooled down until it thermalizes at very low temperatures (the experimental realization of Bose-Einstein condensation was achieved through the development of cooling techniques, like laser cooling and evaporative cooling, which allowed physicists to reach temperatures in the range of nano-Kelvin; see \cite{Phi} and references therein). Afterwards, the traps are switched off so that the gas starts to evolve and the momentum distribution of the particles can be measured. At sufficiently low temperature, it turns out that most particles occupy the same one-particle orbital, showing condensation. 

From the point of view of mathematics, Bose-Einstein condensates produced in these experiments can be described as systems of $N$ bosons, trapped by an external 
potential with a characteristic length scale $L$ and interacting through a repulsive potential with scattering length $\frak{a} \ll L$. A particularly important scaling limit is the so called Gross-Pitaevskii regime, where $N \frak{a} = L$. In this regime, we have $\rho \frak{a}^3 = N \frak{a}^3 /L^3 = N^{-2}$; the Gross-Pitaevskii limit describes therefore very dilute Bose gases.

In the following, we are going to choose units so that $L = 1$. We consider $N$ bosons moving in the three-dimensional space $\bR^3$. The Hamilton operator is given by 
\begin{equation}\label{eq:HN} H_N = \sum_{j=1}^N \big[ -\Delta_{x_j} + V_\text{ext} (x_j) \big] + \sum_{i<j}^N N^2 V (N (x_i - x_j) \end{equation} 
and acts on the Hilbert space $L^2_s (\bR^{3N})$, defined as the subspace of $L^2 (\bR^{3N})$ consisting of functions that are symmetric with respect to permutations of the $N$ particles. In (\ref{eq:HN}), $V_\text{ext} : \bR^3 \to \bR$ is a confining potential (ie. $V_\text{ext} (x) \to \infty$, as $|x| \to \infty$) and $V : \bR^3 \to \bR$ is an interaction potential, which we assume to be radial, non-negative and to decay sufficiently fast at infinity (at least as $|x|^{-3-\eps}$, for an $\eps > 0$). 

\medskip

{\bf The scattering length.} We denote by $\frak{a}$ the scattering length of the potential $V$. Let us recall that the scattering length is a physical quantity measuring the effective range of $V$. It is defined through the solution of the zero-energy scattering equation
\begin{equation}\label{eq:0-en} \Big[ -\Delta + \frac{1}{2} V \Big] f = 0 \end{equation} 
with the boundary condition $f(x) \to \infty$, as $|x| \to \infty$. Asymptotically for $|x| \gg 1$, we find  
\[ f(x) \simeq 1 - \frac{\frak{a}}{|x|} \]
where the constant $\frak{a} > 0$ is the scattering length of $V$. For integrable potentials, it can be equivalently determined by the identity 
\begin{equation}\label{eq:scaV} 8\pi \frak{a} = \int V (x) f(x) dx \end{equation} 
which in particular shows that, for repulsive interactions, $8\pi \frak{a} < \widehat{V} (0)$. By scaling, it is then simple to check that the scattering length of the interaction potential $N^2 V(N.)$ appearing in (\ref{eq:HN}) is given by $\frak{a}/N$, which characterises the Gross-Pitaevskii regime.


\medskip

{\bf Leading order estimate for ground state energy.} The ground state energy $E_N (V_\text{ext},V)$ of the Hamilton operator (\ref{eq:HN}) has been first estimated by Lieb, Seiringer and Yngvason in \cite{LSY}. They proved that 
\begin{equation}\label{eq:LSY} \lim_{N \to \infty} \frac{E_N (V_\text{ext},V)}{N} = \inf_{\substack{\ph \in L^2 (\bR^3) :\\ \| \ph \| =1}} \cE_\text{GP} (\ph) \end{equation} 
with the Gross-Pitaevskii energy functional 
\begin{equation}\label{eq:GP} 
\cE_\text{GP} (\ph) =  \int \Big[ |\nabla \ph|^2 + V_\text{ext} |\ph|^2 + 4 \pi\frak{a} |\ph|^4 \Big] dx  \, . 
\end{equation} 
In particular, (\ref{eq:LSY}) implies that, to leading order, the ground state energy only depends on the interaction potential through its scattering length $\frak{a}$ (more recently, a different proof of (\ref{eq:LSY}) has been given in \cite{NRS}). 

The proof of (\ref{eq:LSY}) obtained in \cite{LSY} is based on matching upper and lower bounds for $E_N (V_\text{ext},V)$. The upper bound is achieved through an appropriate modification of the trial state used by Dyson in \cite{Dy} to estimate the energy of an homogeneous, dilute gas of hard spheres, in the thermodynamic limit. As for the lower bound, it relies on the techniques introduced by Lieb and Yngvason in \cite{LY}, extended to the non-homogeneous setting. 

\medskip

{\bf Dyson's Lemma.} A key ingredient in the proof of the lower bound is a lemma, originally due to Dyson, that can be used to replace the singular and short range interaction appearing in (\ref{eq:HN}) by a more regular potential, having smaller size and longer range. Since we believe it is interesting to compare Dyson's lemma with more recent tools that will be discussed in the next section, we state it precisely and we reproduce its short proof, in the form proposed by Lieb and Yngvason in \cite{LY}. 
\begin{lemma} \label{lm:dyson} 
Let $v$ be a non-negative function on $[0;\infty)$, with finite range $R_0 > 0$. Let $U$ be a non-negative function on $[0;\infty)$, with $U(r) = 0$ for all $r < R_0$ and with $\int U(r) r^2 dr  \leq 1$. Let $\cB \subset \bR^3$ be star shaped with respect to $0$. Then, we have 
\begin{equation}\label{eq:dyson} \int_\cB \Big[ \mu |\nabla \phi (x)|^2  + \frac{1}{2} v (|x|) |\phi (x)|^2 \Big] dx \geq \mu \frak{a} \int_\cB U (|x|) |\phi (x)|^2 dx \end{equation} 
where $\frak{a}$ is the scattering length (as defined in (\ref{eq:0-en})) of the potential $\bR^3 \ni x \to v (|x|)/\mu$. 
\end{lemma}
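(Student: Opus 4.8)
The plan is to reduce the three-dimensional estimate to a one-dimensional radial problem along rays emanating from the origin, using the star-shapedness of $\cB$, and then to compare the radial energy with the zero-energy scattering solution of $v$. First I would eliminate $\mu$: dividing (\ref{eq:dyson}) by $\mu$ and setting $\tilde v = v/\mu$, the claim becomes $\int_\cB \big[ |\nabla \phi|^2 + \tfrac12 \tilde v |\phi|^2 \big]\,dx \ge \frak{a} \int_\cB U |\phi|^2\,dx$, where $\frak{a}$ is now the scattering length of $\tilde v$ in the sense of (\ref{eq:0-en}); so I may assume $\mu = 1$. Passing to polar coordinates $x = r\omega$ and using that $\cB$ is star-shaped (so each radial section $\{ r : r\omega \in \cB \}$ is an interval $[0,R(\omega))$), together with the pointwise bounds $|\nabla \phi|^2 \ge |\partial_r \phi|^2 \ge (\partial_r |\phi|)^2$, it suffices to prove, for each fixed direction and the non-negative radial profile $u(r) = |\phi(r\omega)|$, the one-dimensional inequality
\[ \int_0^{R} \big[ |u'(r)|^2 + \tfrac12 v(r) |u(r)|^2 \big] r^2\,dr \ \ge\ \frak{a} \int_0^{R} U(r) |u(r)|^2 r^2\,dr , \]
and then to integrate over $\omega \in S^2$ to recover (\ref{eq:dyson}). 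If $R < R_0$ the right-hand side vanishes and there is nothing to prove, so I would assume $R \ge R_0$.

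The heart of the argument is a pointwise radial estimate, which I will call $(\star)$: for every $r_1$ with $R_0 \le r_1 \le R$,
\[ \int_0^{r_1} \big[ |u'(r)|^2 + \tfrac12 v(r) |u(r)|^2 \big] r^2\,dr \ \ge\ \frak{a}\, |u(r_1)|^2 . \]
I would establish $(\star)$ by comparison with the scattering solution $f$ of (\ref{eq:0-en}) for $v$. For fixed $r_1$, the left-hand quadratic form is minimized, over profiles with prescribed value at $r_1$, by the solution regular at the origin of the radial equation $-(r^2 u')' + \tfrac12 v r^2 u = 0$, namely a multiple of $f$; since $v$ has range $R_0$, one has $f(r) = 1 - \frak{a}/r$ for $r \ge R_0$. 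Integrating by parts against this equation evaluates the minimal energy explicitly: for $u = cf$,
\[ \int_0^{r_1} \big[ |f'|^2 + \tfrac12 v f^2 \big] r^2\,dr = \big[ f' f r^2 \big]_0^{r_1} = f'(r_1) f(r_1) r_1^2 = \frak{a}\,(1 - \frak{a}/r_1) , \]
the boundary term at the origin vanishing by regularity of $f$. Matching $c\,f(r_1) = u(r_1)$ gives minimal energy $\frak{a} |u(r_1)|^2 / (1 - \frak{a}/r_1) \ge \frak{a} |u(r_1)|^2$, since $1 - \frak{a}/r_1 \le 1$; as $u=cf$ is the minimizer, this proves $(\star)$.

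Finally, I would upgrade $(\star)$ to the weighted one-dimensional inequality by an averaging trick. Discarding the non-negative tail $\int_{r_1}^{R}[\cdots]\,r^2\,dr$ shows that the full radial energy dominates $\frak{a} |u(r_1)|^2$ for every $r_1 \in [R_0,R]$; multiplying this by $U(r_1) r_1^2 \ge 0$, integrating in $r_1$, and using the normalization $\int_0^\infty U(r) r^2\,dr \le 1$ yields exactly the one-dimensional inequality stated above. Integrating over $\omega$ then closes the argument.

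I expect the main obstacle to lie in the clean justification of $(\star)$: correctly identifying the scattering solution as the minimizer of the radial form with prescribed endpoint value, controlling the boundary term at the origin in the integration by parts, and ensuring $1 - \frak{a}/r_1 > 0$. The last point follows from $\frak{a} \le R_0 \le r_1$, which in turn rests on the standard fact that for non-negative $v$ the solution satisfies $0 \le f \le 1$. By contrast, the reduction to one dimension and the averaging against $U$ are comparatively routine.
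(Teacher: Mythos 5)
Your proposal is correct and follows essentially the same route as the paper: reduction to a one\--dimensional radial problem along rays via star\--shapedness, comparison of the constrained radial energy with the zero\--energy scattering solution by integration by parts, the elementary bound $1-\frak{a}/r_1\le 1$ (with $\frak{a}\le R_0$), and treatment of a general $U$ by superposition over radii $r_1\ge R_0$ --- the paper phrases your averaging step as writing $U$ as a mixture of the potentials $\delta(r-R)/R^2$. The only cosmetic difference is that the paper substitutes $u=r\phi$ and works with the form $\int \mu|u'-u/r|^2\,dr$, whereas you keep the weight $r^2\,dr$ and normalize against $f$ directly.
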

\begin{proof} 
It is enough to prove the claim for radial $\phi$. The general case follows because the total kinetic energy is always larger than its radial component. We consider first the case $U(r) = \delta (r-R)/R^2$ for an $R > R_0$ (this choice satisfies $\int U(r) r^2 dr \leq 1$). Writing $\phi (x) = u (|x|)/|x|$, we need to prove that 
\begin{equation}\label{eq:dys-cl} \int_0^{R_1} \Big[ \mu \big| u'(r) - u(r)/ r \big|^2 + \frac{1}{2} v(r) |u(r)|^2 \Big] dr \geq \left\{ \begin{array}{ll} 0 &\text{if } R_1 < R \\ \mu \frak{a} \frac{|u(R)|^2}{R^2} &\text{if } R_1 > R \end{array} \right.  \, .  \end{equation} 
By homogeneity in $u$, we can impose the normalization $u(R) = R-\frak{a}$. For $R_1 < R$, the inequality holds trivially. If $R_1 > R$, on the other hand, we can minimize the l.h.s. of (\ref{eq:dys-cl}) over all functions $u$, with $u(0)= 0$ and $u(R) = R- \frak{a}$. We find that the minimum is attained if $u = u_0$, with $u_0$ solving the zero-energy scattering equation 
\[ - \mu u''_0 (r) + \frac{1}{2} v(r) u_0 (r) = 0  \, .  \]
The normalization $u_0 (R) = R - \frak{a}$ implies that $u_0 (r) = r-\frak{a}$ for all $r > R_0$. Integrating by parts, we conclude that 
\[\begin{split}  
\int_0^{R_1}  \Big[ &\mu \big| u'(r) - u(r)/r \big|^2 + \frac{1}{2} v(r) |u(r)|^2 \Big] dr \\ &\geq 
\int_0^{R}  \Big[ \mu \big| u'_0 (r) - u_0 (r)/r \big|^2 + \frac{1}{2} v(r) |u_0 (r)|^2 \Big] dr \\ &= \mu u_0 (R) u'_0 (R)  - \mu u_0 (R)^2/R = \mu \frak{a} \big( 1 - \frak{a} /R \big) \geq \mu \frak{a}  \big( 1 - \frak{a} /R \big)^2 = \mu \frak{a} |u(R)|^2 / R^2 \, .  \end{split} \]
The lemma follows by noticing that a general $U$ can be written as linear combination of interactions having the form $\delta (r-R)/R^2$, ie. 
\[ U(r) = \int \frac{\delta (r- R)}{R^2} U(R) R^2 dR \]
where $\int U(R) R^2 dR \leq 1$ by assumption. 
\end{proof} 
Already Dyson \cite{Dy} applied a version of Lemma \ref{lm:dyson}  to derive a lower bound for the ground state energy of a dilute hard-sphere Bose gas. In his work, however, he sacrificed the full kinetic energy to replace the potential. As a result, his lower bound was off, approximately by a factor 14. Lieb and Yngvason, on the other hand, realized in \cite{LY} that it is more convenient to choose $\mu = 1-\eps$ in (\ref{eq:dyson}). Therefore, through Lemma \ref{lm:dyson}, they ended up with a nice potential and still with a bit of kinetic energy available. After dividing the volume in small boxes, this procedure allowed them to consider the regularized potential as a perturbation of the remaining kinetic energy and to apply rigorous perturbation theory (in the form of Temple's inequality) to achieve the correct lower bound. 



\medskip

{\bf Condensation in the Gross-Pitaevskii limit.} With standard tools in analysis, Lieb, Seiringer and Yngvason proved in \cite{LSY} the existence of a unique, positive and continuously differentiable minimizer $\ph_\text{GP}$ for the Gross-Pitaevskii energy functional (\ref{eq:GP}). The convergence (\ref{eq:LSY}) of the ground state energy per particle towards the minimum of (\ref{eq:GP}) suggests that the ground state of the Hamiltonian (\ref{eq:HN}) (and, more generally, many-body states with sufficiently small energy) should exhibit Bose-Einstein condensation in $\ph_\text{GP}$.

In contrast with the Hamilton operator for the ideal gas, eigenfunctions of (\ref{eq:HN}) are not products of one-particle orbitals. For this reason, some care is required when defining  the meaning of condensation. It turns out that Bose-Einstein condensation can be formulated in terms of reduced density matrices. We define the reduced one-particle density matrix $\gamma_N$ associated with a many-body wave function $\psi_N \in L^2_s (\bR^{3N})$ as the non-negative trace-class operator on the one-particle Hilbert space $L^2 (\bR^3)$ with the integral kernel 
\begin{equation}\label{eq:gamma} \gamma_N (x;y) = N \int dx_2 \dots d x_N \, \psi_N (x , x_2, \dots , x_N) \overline{\psi}_N (y,x_2, \dots , x_N)  \, . \end{equation} 
Here, we chose the normalization $\tr \, \gamma_N = N$. With this definition, we can express the expectation of the number of particles in a state $\ph \in L^2 (\bR^3)$ as 
\begin{equation}\label{eq:defBEC} \langle \psi_N , a^* (\ph) a (\ph) \psi_N \rangle = \langle \ph , \gamma_N \ph \rangle \, .  \end{equation} 
Therefore, we say that $\psi_N \in L_s^2 (\bR^{3N})$ exhibits Bose-Einstein condensation in the state $\ph \in L^2 (\bR^3)$ if $\langle \ph, \gamma_N \ph \rangle$ remains of order $N$, in the limit $N \to \infty$ (in this sense, Bose-Einstein condensation is actually a property of a sequence of wave functions $\{ \psi_N \}_{N \in \bN}$, rather than of a single $\psi_N$). 

In \cite{LS}, Lieb and Seiringer proved that the one-particle density matrix $\gamma_N$ associated with the ground state of the Hamilton operator (\ref{eq:HN}) is such that 
\begin{equation}\label{eq:BEC1} \frac{1}{N} \langle \ph_\text{GP} , \gamma_N \ph_\text{GP} \rangle \to 1 \end{equation} 
in the limit $N \to \infty$. According to (\ref{eq:defBEC}), this means that the ground state of (\ref{eq:HN}) exhibits complete Bose-Einstein condensation in the minimizer $\ph_\text{GP}$ of the Gross-Pitaevskii functional (\ref{eq:GP}); all particles, up to a fraction vanishing in the limit $N \to \infty$, are described by the same one-particle orbital $\ph_\text{GP}$. In \cite{LS2,NRS}, the convergence (\ref{eq:BEC1}) was extended, with different techniques, to any sequence of approximate ground states, ie. any sequence of wave functions $\psi_N$ satisfying 
\begin{equation}\label{eq:apprGS} \lim_{N \to \infty} \frac{1}{N} \langle \psi_N, H_N \psi_N \rangle \to \inf_{\substack{\ph \in L^2 (\bR^3) : \\ \| \ph \|_2 = 1}} \cE_\text{GP} (\ph)   \, . \end{equation}
Observe that (\ref{eq:BEC1}) does not imply that the product state $\psi_N^\text{prod}  = \ph_\text{GP}^{\otimes N}$ is a good approximation for the ground state of (\ref{eq:HN}). In fact, a simple computation shows that
\[ \frac{1}{N} \langle \psi_N^\text{prod}, H_N \psi_N^\text{prod} \rangle \simeq  \int \Big[ |\nabla \ph_\text{GP}|^2 + V_\text{ext} |\ph_\text{GP}|^2 + \frac{1}{2} \widehat{V} (0) |\ph_\text{GP}|^4 \Big]  dx > \cE_\text{GP} (\ph_\text{GP})  \]
where we used the remark following (\ref{eq:scaV}). Thus, the energy of the product state exceeds the ground state energy, even at leading order. Correlations among particles are crucial to approach the ground state energy (\ref{eq:LSY}); only at the level of reduced density matrices, correlations are integrated out and convergence (\ref{eq:BEC1}) can be established. 


\medskip

{\bf Proof of condensation.} The proof of (\ref{eq:BEC1}) obtained in \cite{LS} is based on two main ingredients. First of all, it relies on accurate energy estimates, which imply that
the part of the ground state kinetic energy contributing, in the limit of large $N$, to the quartic term in the Gross-Pitaevskii functional (\ref{eq:GP}) (ie. the difference between the many-body kinetic energy and the kinetic energy of the Gross-Pitaevskii minimizer $\phi_\text{GP}$) is localized in small regions of the configuration space, where two particles are close (much closer than they typically are). Denoting by $\psi_N \in L^2_s (\L^N)$ the normalized ground state wave function, we define, for fixed ${\bf X} = (x_2, \dots , x_N)$, 
\[ f_{\bf X} (x) = \frac{1}{\ph_\text{GP} (x)} \psi_N (x, {\bf X})  \, . \]
Introducing the set 
\[ \Omega_{\bf X} = \big\{ x \in \bR^3 : \min_{k=2,\dots , N}  |x- x_k| \geq N^{-1/3 -\delta} \big\} \, , \]
Lieb and Seiringer show in \cite{LS} that, for $\delta > 0$ small enough,  
\begin{equation}\label{eq:poin} \lim_{N \to \infty} \int d{\bf X} \int_{\Omega_{\bf X}} dx \, |\ph_\text{GP} (x)|^2 \, |\nabla_x f_{\bf X} (x)|^2 = 0  \, .  \end{equation}
Thus, for every fixed ${\bf X} = (x_2, \dots , x_N) \in \bR^{3(N-1)}$, most of the mass of $\ph_\text{GP} (x) \nabla_x f_{\bf X} (x)$ is contained in the set $\Omega_{\bf X}^c$, given by the union of small balls around the points $x_j$, $j=2,\dots , N$, whose total volume is of the order $N^{-3\delta}$ and converges to $0$, as $N \to \infty$. 
 
The second important ingredient is a generalized Poincar{\'e} inequality, which can be used, together with (\ref{eq:poin}), to show that $f_{\bf X}$ is almost constant and thus to conclude Bose-Einstein condensation, as stated in (\ref{eq:BEC1}). A non-standard Poincar{\'e} inequality is needed here because, despite the fact that $\Omega_{\bf X}^c$ has a small volume, the set $\Omega_{\bf X}$ may be quite strange, it does not even need to be connected. In general, smallness of the gradient $\nabla_x f_{\bf X}$ on non-connected sets with small complements does not imply that $f_{\bf X}$ is approximately constant (counterexamples can be easily constructed). It does, however, if the total $L^2$-norm of 
$\nabla_x f_{\bf X}$ is bounded; more details on this part of the proof can be found in \cite{LSann}.  

\medskip

In the next section, we will discuss, in some details, recent improvements of the estimate (\ref{eq:LSY})  on the ground state energy and of the bound (\ref{eq:BEC1}) on the number of particles in the condensate, in the Gross-Pitaevskii limit. Before doing so, let us briefly mention some other mathematical results concerning the Bose gas in the Gross-Pitaevskii regime that have been established in the last two decades (of course, the list is not complete). 

\medskip

{\bf The Gross-Pitaevskii functional for rotating gases.} Working in the moving coordinate system, a Bose gas rotating with angular velocity $\Omega \in \bR^3$ in the Gross-Pitaevskii regime can be described by the many-body Hamiltonian 
\begin{equation}\label{eq:Hrot} H^\text{rot}_N = \sum_{j=1}^N \big[ (i\nabla_{x_j} + A (x_j))^2 + W (x_j) \big]  + \sum_{i<j}^N N^2 V (N (x_i - x_j)) \end{equation} 
with $A(x) = (\Omega \wedge x)/2$ and $W(x) = V_\text{ext} (x) - (\Omega \wedge x)^2/4$. Mathematically, the rotation of the gas modifies the Hamilton operator like a constant magnetic field $\Omega$, with vector potential $A$. In order for (\ref{eq:Hrot}) to be bounded below, the external confining potential must grow fast enough, as $|x| \to \infty$, to compensate the centrifugal potential. Under the assumption that $W(x) \to \infty$ (more precisely, $W(x) \geq c \log |x| - C$, for constants $c,C>0$), Lieb and Seiringer proved in \cite{LS2} that, in the same spirit as in (\ref{eq:LSY}), the ground state energy $E_N^\text{rot} (\Omega, V_\text{ext})$ of (\ref{eq:Hrot}), defined minimizing the expectation of (\ref{eq:Hrot}) over all normalized bosonic wave functions, is such that 
\begin{equation}\label{eq:LS-rot} \lim_{N \to \infty} \frac{E^\text{rot}_N (\Omega, V_\text{ext},V)}{N} = \inf_{\substack{\ph \in L^2 (\bR^3) : \\ \| \ph \|_2 =1}} \int \Big[ |(i\nabla + A) \ph|^2 + W |\ph|^2 + 4\pi \frak{a} |\ph|^4 \Big]  dx  \, .  \end{equation} 
In contrast with the case $\Omega = 0$, the functional on the r.h.s. of (\ref{eq:LS-rot}) does not always have a unique minimizer. For this reason, Bose-Einstein condensation has to be stated in a slightly more complicated form, compared with (\ref{eq:BEC1}). Denote by 
$\Gamma$ the set of reduced one-particle density matrices obtained from sequences of approximate ground states of (\ref{eq:Hrot}), in the limit $N \to \infty$. Complete Bose-Einstein condensation, as established in \cite{LS2}, means here that the extremal 
points in $\Gamma$ coincide with the rank-one orthogonal projections into minimizers of the Gross-Pitaevskii functional (\ref{eq:LS-rot}) and that every element of $\Gamma$ can be expressed as a convex combination of these projections.

The behavior of the Gross-Pitaevskii functional (\ref{eq:LS-rot}) depends on the size of the angular velocity $\Omega$. Assume the confining potential $V_\text{ext}$ to be cylindrically symmetric. Then, if $|\Omega|$ is sufficiently small, (\ref{eq:LS-rot}) has a unique cylindrically symmetric minimizer. In a fixed reference frame, its velocity is zero. As already remarked in \cite{LSY-sup}, this can be interpreted as a manifestation of superfluidity (for sufficiently small $\Omega$, the gas does not react to the rotation). For larger values of $|\Omega|$, on the other hand, the minimizers of (\ref{eq:LS-rot}) form quantized vortices, breaking the axial symmetry and therefore leading to a continuous family of minimizers; this was shown in \cite{S-rot} (remark that there is no symmetry breaking in the ground state, if we considered (\ref{eq:Hrot}) over the full Hilbert space $L^2 (\bR^{3N})$, neglecting the bosonic statistics). In fact, the formation of quantized vortices in minimizers of (\ref{eq:LS-rot}) and of related functionals describing superfluids and superconductors has generated 
a substantial mathematical literature; for a review of this interesting subject, see \cite{Afta,BBH,CPRY,SaSe,Si}.

The case of a rotating Bose gas in an harmonic trap is particularly interesting. Here, there is a critical angular velocity $\Omega_c$, above which the gas is no longer confined. Sufficiently close to (but below) $\Omega_c$, the system is effectively projected to the lowest Landau level and can be described there by a Hamilton operator, modelling the bosonic analogue of the fractional quantum Hall effect; this was proven in \cite{LewS}. Also for this simplified Hamiltonian, convergence to the minimum of an appropriately defined Gross-Pitaevskii energy functional can be established in appropriate scaling regimes, see \cite{LSY-yrast}.

\medskip

{\bf The Gross-Pitaevskii limit in 2 dimensions.} In two dimensions, for Bose gases trapped in a volume of order one by a confining potential, the Gross-Pitaevskii limit is characterised by the condition that 
\begin{equation}\label{eq:gdef}  g = \frac{N}{| \log (N \frak{a}^2)|} \end{equation} 
remains fixed (or tends to a finite, non-vanishing, limit, as $N \to \infty$). This implies that the Gross-Pitaevskii regime can only be approached if the scattering length of the interaction potential $\frak{a}$ is exponentially small in the number of particles. For this reason, from the point of view of physics, the Gross-Pitaevskii limit in two dimensions is probably not as relevant as it is in three dimensions. From the point of view of mathematics, however, it is a well-defined regime and it is certainly the source of interesting and challenging questions. 

In \cite{LSY2}, Lieb, Seiringer and Yngvason considered a two-dimensional Bose gas confined by a trapping potential $V_\text{ext}$ and interacting through a repulsive two-body potential $V$ with scattering length $\frak{a}$. Assuming (\ref{eq:gdef}) to be fixed (positive and finite), they proved the analogue of (\ref{eq:LSY}), showing that the many-body ground state energy is such that, as $N \to \infty$,
\[ \frac{E_N (V_\text{ext}, V)}{N} \to \min_{\substack{\ph \in L^2 (\bR^3): \\ \| \ph \| =1}} \int \big[ |\nabla \ph|^2 + V_\text{ext} |\ph|^2 + 4 \pi g |\ph|^4 \big] dx  \, .  \]
Also in this case, the techniques of \cite{LS,LS2,NRS} can be used to show that the ground state of the Hamiltonian exhibits complete Bose-Einstein condensation. 

\medskip

{\bf Dynamics of Bose gases.} Instead of considering equilibrium properties of the Hamilton operator (\ref{eq:HN}) describing Bose gases in the Gross-Pitaevskii regime, it is also intersting to approach the problem from a dynamical point of view. Physically, studying dynamics means that we consider the time-evolution resulting from a change of the external fields. Motivated by the experiments described above, where the evolution of initially confined gases is observed after traps are switched off, it seems important to consider the solution of the time-dependent many-body Schr\"odinger equation 
\begin{equation}\label{eq:schr} i\partial_t \psi_{N,t} = \Big[ \sum_{j=1}^N -\Delta_{x_j} + \sum_{i<j}^N N^2 V (N (x_i - x_j)) \Big] \psi_{N,t} \end{equation} 
with initial data $\psi_{N,0} \in L^2_s (\bR^{3N})$ describing the ground state of the trapped Hamiltonian (\ref{eq:HN}). As explained in the previous paragraphs, $\psi_{N,0}$ exhibits Bose-Einstein condensation in the minimizer $\ph_\text{GP}$ of the Gross-Pitaevskii energy functional (\ref{eq:GP}). A natural question is therefore the following: Is Bose-Einstein condensation preserved by the time-evolution? In other words, does $\psi_{N,t}$ continue to exhibit condensation also for $t \not = 0$ and, if yes, in which one-particle state? 

This question was first addressed (for Bose gases in the Gross-Pitaevskii regime) in \cite{ESY1,ESY2,ESY3}, where it was proven that, for every fixed $t \in \bR$, the solution $\psi_{N,t} = e^{-i H_N t} \psi_{N,0}$ of (\ref{eq:schr}) does indeed exhibit complete Bose-Einstein condensation and that the evolution of the condensate wave function $\ph_t$ is determined by the time-dependent Gross-Pitaevskii equation
\begin{equation}\label{eq:GPtd} i\partial_t \ph_t = -\Delta \ph_t + V_\text{ext} \ph_t + 8\pi \frak{a} |\ph_t|^2 \ph_t \end{equation} 
with the initial data $\ph_{t=0} = \ph_\text{GP}$. This result shows that Gross-Pitaevskii theory can also be used to predict non-equilibrium properties of Bose gases at low energy. Different proofs of the convergence towards the solution of the nonlinear Gross-Pitaevskii equation (\ref{eq:GPtd}) have been later obtained in \cite{P,BDS}. More recently, convergence with optimal rates has been established in \cite{BS}. 


\medskip

{\bf The Gross-Pitaevskii regime at positive temperature.} Recently, the results (\ref{eq:LSY}) and (\ref{eq:BEC1}) have been extended to positive temperature in \cite{DSY,DS}. In the homogenous setting, with $N$ particles moving in the unit torus $\L = [0;1]^3$ (so that the density of the gas is $\rho = N$) and interacting through a repulsive potential with scattering length $\frak{a}_N = \frak{a}/N$, for temperatures comparable with the critical temperature of the ideal Bose gas (ie. $T \simeq \rho^{2/3} = N^{2/3}$), the free energy has been shown to have the form (for some $\eps > 0$ small enough)  
\begin{equation}\label{eq:free} F (\beta, N) = F_0 (\beta, N) + 4\pi \frak{a}_N \big(2 \rho^2 - \rho^2_0 \big) + \cO (N^{1-\eps}) \end{equation} 
where $F_0 (\beta,N)$ and $\rho_0 = \rho_0 (\beta, N)$ are the free energy and the  condensate density for the ideal gas (so that the first term on the r.h.s. of (\ref{eq:free}) is of the order $N^{5/3}$ and the second of order $N$). Moreover, the one-particle reduced density matrix associated with the thermal state at inverse temperature $\beta$ (or to any approximate sequence of thermal states) has been proven to approach (in the trace norm topology) the reduced one-particle density matrix associated with the free Gibbs state. In particular, this shows the existence of a phase transition for Bose-Einstein condensation in the Gross-Pitaevskii regime. 

\medskip

{\bf Additional references.} There are several review articles and lecture notes covering in more details most of the subjects discussed in this section. The interested reader can consult, for example, \cite{BPS,LSSY,Rou,Sch1,Sch2,Sei1,Sei2,Yng}.



\section{Bogoliubov theory in the Gross-Pitaevskii regime}

In the last 5 years there has been some progress in the mathematical understanding of the spectral properties of Bose gases in the Gross-Pitaevskii limit, beyond the leading order estimates (\ref{eq:LSY}), (\ref{eq:BEC1}). In this section, we are going to review some of these  new results, which are based on a rigorous version of Bogoliubov theory, developed in \cite{BBCS1,BBCS2,BBCS3,BBCS4}. 

To simplify our discussion, we will focus here on the homogeneous, translation invariant setting. We will therefore consider a system of $N$ particles moving in the three-dimensional unit torus 
$\L$, described by the Hamilton operator 
\begin{equation}\label{eq:HN2}
H_N = \sum_{j=1}^N -\Delta_{x_j} + \sum_{i<j}^N N^2 V (N(x_i -  x_j)) 
\end{equation} 
acting on $L^2_s (\L^N)$. From (\ref{eq:LSY}), the ground state energy $E_N (V)$ of (\ref{eq:HN2}) is such that 
\begin{equation}\label{eq:LSYti} \lim_{N \to \infty} \frac{E_N (V)}{N} =  4 \pi \frak{a}  \, . \end{equation} 
Moreover, from (\ref{eq:BEC1}) we conclude that every sequence $\psi_N \in L^2_s (\L^N)$ of approximate ground states of (\ref{eq:HN2}), satisfying $\langle \psi_N, H_N \psi_N \rangle / N \to 4 \pi \frak{a}$ exhibits complete Bose Einstein condensation, in the sense 
\begin{equation}\label{eq:BECti} \frac{1}{N} \langle \ph_0 , \gamma_N \ph_0 \rangle \to 1 \end{equation} 
into the zero-momentum state $\ph_0 \in L^2 (\L)$ defined by $\ph_0 (x) = 1$ for all $x \in \L$. In this section, we are going to sketch some of the main ideas in the proof of the following theorem, taken from \cite{BBCS3,BBCS4}, improving on the estimates (\ref{eq:LSYti}), (\ref{eq:BECti}), for a class of integrable interaction potentials. \begin{theorem} \label{thm:bogo}
Let $V \in L^3 (\bR^3)$, non-negative, radial and with compact support. 
\begin{itemize}
\item[i)] {\bf Optimal rate for condensation.} Let $\psi_N \in L^2_s (\L^N)$ be a sequence of approximate ground states, such that 
\[ \langle \psi_N, H_N \psi_N \rangle \leq 4 \pi \frak{a} N + \zeta \]
for $\zeta \ll N$. Then, denoting by $\gamma_N$ the one-particle reduced density matrix of $\psi_N$, defined as in (\ref{eq:gamma}), we find
\begin{equation}\label{eq:optBEC} N - \langle \ph_0, \gamma_N \ph_0 \rangle \leq C (\zeta+1) \end{equation} 
for a constant $C > 0$. 
\item[ii)] {\bf Precise estimate for ground state energy.} We have 
\begin{equation}\label{eq:ENGP} \begin{split} E_{N} (V) = \; &4\pi \frak{a} (N-1) + e_\Lambda \frak{a}^2 \\ & - \frac{1}{2}\sum_{p\in 2\pi \bZ^3 \backslash \{ 0 \}} \left[ p^2+8\pi \frak{a}  - \sqrt{|p|^4 + 16 \pi \frak{a}  p^2} - \frac{(8\pi \frak{a})^2}{2p^2}\right] + \cO (N^{-1/4}) \,  
\end{split}
\end{equation}   
where 
\begin{equation}\label{eq:eLambda0}
e_\Lambda = 2 - \lim_{M \to \infty} \sum_{\substack{p \in \bZ^3 \backslash \{ 0 \} : \\ |p_1|, |p_2|, |p_3| \leq M}} \frac{\cos (|p|)}{p^2} \end{equation}
and, in particular, the limit exists. 
\item[iii)] {\bf Excitation spectrum.} The spectrum of $H_N - E_N (V)$, below a threshold $\zeta > 0$, consists of eigenvalues having the form 
 \begin{equation}
    \begin{split}\label{eq:excGP}
    \sum_{p\in 2\pi \bZ^3 \backslash \{ 0 \}} n_p \sqrt{|p|^4+ 16 \pi \frak{a}  p^2}+ \cO (N^{-1/4} \zeta^3) \, 
    \end{split}
    \end{equation}
with $n_p \in \bN$ for all momenta $p \in 2\pi \bZ^3 \backslash \{ 0 \}$.
\end{itemize}
\end{theorem}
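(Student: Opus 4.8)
\section*{Proof proposal}

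The plan is to recast the problem on the bosonic Fock space built over the orthogonal complement of the condensate $\ph_0$, factoring out the macroscopically occupied zero mode. Following the excitation-map approach, I would introduce a unitary $U_N : L^2_s (\L^N) \to \cF_\perp^{\leq N}$ onto the truncated Fock space of excitations orthogonal to $\ph_0$, replacing the operators $a_0, a_0^*$ acting on the condensate by factors of $\sqrt{N - \cN_+}$, where $\cN_+ = \sum_{p \in 2\pi\bZ^3 \backslash \{0\}} a_p^* a_p$ counts the particles outside the condensate. The crucial identity is $\langle \psi_N, \cN_+ \psi_N \rangle = N - \langle \ph_0, \gamma_N \ph_0 \rangle$, so that (\ref{eq:optBEC}) is equivalent to the operator bound $\cN_+ \lesssim \zeta + 1$ on low-energy states. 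Conjugating $H_N$ produces an excitation Hamiltonian $\cL_N = U_N H_N U_N^*$ whose constant and quadratic parts involve the interaction only through $\widehat{V}(p/N)$; since $\sum_p \widehat{V}(p/N)$ diverges, the naive Bogoliubov approximation overcounts the energy --- the same mechanism that makes the product-state energy wrong already at leading order, as noted after (\ref{eq:scaV}).

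The decisive step is to absorb the short-scale correlation structure before diagonalizing. I would conjugate $\cL_N$ by a generalized Bogoliubov transformation $e^{B}$, with $B = \tfrac{1}{2} \sum_{p \neq 0} \eta_p \big( b_p^* b_{-p}^* - \hc \big)$, where the $b_p^{\#}$ are the modified field operators carrying the $\sqrt{(N-\cN_+)/N}$ factors and the coefficients $\eta_p$ are essentially the Fourier coefficients of the solution $f$ of the zero-energy scattering equation (\ref{eq:0-en}). The point is that, by (\ref{eq:scaV}), this transformation renormalizes the bare coupling $\widehat{V}(0)$ down to $8\pi\frak{a}$, yielding a renormalized Hamiltonian $\cG_N = e^{-B} \cL_N e^{B}$ whose constant is $4\pi\frak{a} N$ to leading order and whose quadratic part has the correct Gross-Pitaevskii form. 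For the precision demanded by parts (ii) and (iii), one quadratic transformation does not suffice: the cubic term of the form $b^* b^* b$ must itself be renormalized by a further conjugation with a cubic phase $e^{A}$, with $A$ roughly of the form $N^{-1/2}\sum \eta_p\, b^* a^* a$, which is the extra ingredient that pushes the error below the thresholds $\cO(N^{-1/4})$. The backbone of all these steps is a family of a priori bounds controlling how $\cN_+$ and its powers transform under each conjugation.

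Granting the fully renormalized Hamiltonian, the three claims are obtained as follows. For (i), one establishes the coercive lower bound $\cG_N \geq 4\pi\frak{a} N + c\, \cN_+ - C$; combined with the hypothesis $\langle \psi_N, H_N \psi_N \rangle \leq 4\pi\frak{a} N + \zeta$ this forces $\langle \cN_+ \rangle \leq C(\zeta + 1)$, which is precisely (\ref{eq:optBEC}). For (ii), the residual quadratic part of $\cG_N$ is diagonalized by a standard momentum-by-momentum Bogoliubov transformation $e^{\wt B}$, whose explicit action produces the dispersion $\sqrt{|p|^4 + 16\pi\frak{a}\,p^2}$ and, upon evaluating the corresponding ground-state energy shift, the convergent momentum sum in (\ref{eq:ENGP}); the finite-volume constant $e_\Lambda \frak{a}^2$ of (\ref{eq:eLambda0}) emerges from the careful comparison of the discrete sum over $2\pi\bZ^3$ with the continuum scattering length. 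The matching upper bound comes from testing $H_N$ against the trial state $U_N^* e^{B} e^{A} e^{\wt B} \Omega$, with $\Omega$ the Fock vacuum. For (iii), the same diagonalization shows that, restricted to the low-energy sector, $H_N - E_N(V)$ acts as the free quasi-particle Hamiltonian $\sum_{p \neq 0} \sqrt{|p|^4 + 16\pi\frak{a}\,p^2}\; a_p^* a_p$ up to the stated error, whose spectrum is exactly the set of finite sums appearing in (\ref{eq:excGP}).

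The main obstacle will be the control of the error terms generated by these conjugations, and above all by the cubic renormalization. In contrast to the mean-field regime, the transformations $e^{B}$ and $e^{A}$ are not quasi-free in any elementary sense, and their commutators with the kinetic energy and with the singular potential $N^2 V(N\cdot)$ generate a proliferation of terms that must all be dominated by $\cN_+$ and the kinetic-energy operator. Proving the a priori estimates on the moments $\cN_+^k$ that remain stable under these conjugations --- sharp enough to close the lower bound with the explicit constant and the error $\cO(N^{-1/4})$ --- is the technical heart of the argument, and the place where the integrability hypothesis $V \in L^3(\bR^3)$ and the compact support of $V$ enter decisively.
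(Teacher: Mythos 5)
Your overall architecture --- the excitation map $U_N$, a quadratic renormalization by a generalized Bogoliubov transformation whose coefficients come from the zero-energy scattering solution, a further cubic conjugation, and a final diagonalization yielding the dispersion $\sqrt{|p|^4+16\pi\frak{a}p^2}$ --- is exactly the route of \cite{BBCS3,BBCS4} followed in the paper, and your sketches of parts (ii) and (iii) track the paper's argument faithfully.

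The genuine gap is in part (i), at the point where you simply assert the coercive bound $\cG_N \geq 4\pi\frak{a}N + c\,\cN_+ - C$. For general (not small) $V$ this does not follow from the quadratic renormalization alone: the cubic term left in $\cG_N$ can only be dominated by $\cK+\cV_N$ when $\widehat{V}$ is small, which is why the cubic conjugation $e^{A}$ is needed already for condensation, not merely to push the error below $\cO(N^{-1/4})$ in (ii) and (iii) as your proposal suggests. More importantly, even after the cubic step the paper obtains only $\cJ_N \geq 4\pi\frak{a}N + \tfrac{1}{2}\cK - C\cN_+^2/N - C$, by comparing $\cJ_N$ with the excitation Hamiltonian of a renormalized many-body operator with bounded, positive-definite potential $\nu$; the negative term $-C\cN_+^2/N$ cannot be absorbed into $\cK$ uniformly in the number of excitations. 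The missing idea is the localization in $\cN_+$ of Lieb and Solovej \cite{LSol}: one writes $\cJ_N \simeq g_\delta \cJ_N g_\delta + h_\delta \cJ_N h_\delta$ with $g_\delta = g(\cN_+/\delta N)$, controls the quadratic term by $\delta\cN_+$ on the sector $\cN_+ < 2\delta N$, and on the complementary sector $\cN_+ > \delta N$ invokes the previously established leading-order condensation (\ref{eq:BEC1}) of \cite{LS2,NRS} to conclude, by contradiction, that the energy there exceeds $(4\pi\frak{a}+c)N$. Your sketch contains neither the localization nor this external input, and without them the coercivity --- and hence (\ref{eq:optBEC}) --- does not close.
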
 
{\bf Remarks.} 
\begin{itemize}
\item[1)] The estimate (\ref{eq:optBEC}) shows that the number of excitations of the Bose-Einstein condensate in a many-body state $\psi_N$ can be controlled by the difference between the energy of $\psi_N$ and the ground state energy. In fact, for the ground state vector $\psi^\text{gs}_N$, we obtain the precise estimate on the condensate depletion (number of excitations of the condensate) 
\[ N - \langle \ph_0 , \gamma_N^\text{gs} \ph_0 \rangle = \sum_{p \in 2\pi \bZ^3 \backslash \{0 \} } \Big[ \frac{p^2 + 8\pi \frak{a} - \sqrt{|p|^4 + 16 \pi \frak{a} p^2}}{2 \sqrt{|p|^4 + 16 \pi \frak{a} p^2}} \Big] + \cO (N^{-1/8}) \, .   \]
Non-optimal estimates establishing Bose-Einstein condensation have been also recently established in regimes interpolating between the Gross-Pitaevskii and the thermodynamic limit, in \cite{F}. 
\item[2)] The correction $e_\Lambda \frak{a}^2$, appearing in (\ref{eq:ENGP}), is a finite volume effect, arising because $\frak{a}$ is defined through the scattering equation (\ref{eq:0-en}) on the whole space $\bR^3$, rather than on the unit torus $\Lambda$.
\item[3)] The expression (\ref{eq:ENGP}) is the analogue, in the Gross-Pitaevskii regime, of the Lee-Huang-Yang formula for the energy of a dilute Bose gas in the thermodynamic limit. The validity of the Lee-Huang-Yang formula has been recently established through matching upper \cite{YY,BCS} and lower bounds \cite{FS1,FS2}.
\item[4)] Eq. (\ref{eq:excGP}) states that excited eigenvalues are determined, in good approximation, by the sum of the energies of  quantized excitations, labelled by their momentum and characterized by the dispersion law $\eps (p) = \sqrt{|p|^4 + 16 \pi \frak{a} p^2}$. 
\item[5)] Theorem \ref{thm:bogo} has been recently extended to Bose gases in the Gross-Pitaevskii regime, trapped by an external potential in  \cite{NNRT,NT,BSS1,BSS2}. In particular, the approach of \cite{NNRT,NT} shows how the assumption $V \in L^3 (\bR^3)$ appearing in Theorem~\ref{thm:bogo} can be relaxed to $V \in L^1 (\bR^3)$. 
\end{itemize}

Below, we sketch some of the main steps in the proof of Theorem \ref{thm:bogo}. We follow here the approach of \cite{BBCS1,BBCS2} (an alternative approach has been recently proposed in \cite{H,HST}). The proof of the optimal bound (\ref{eq:optBEC}) for the number of excitations and the proof of the estimates (\ref{eq:ENGP}) and (\ref{eq:excGP}) for the low-energy spectrum of (\ref{eq:HN2}) follow a similar strategy. We will discuss in  more details the proof of (\ref{eq:optBEC}). At the end, we will briefly comment on how (\ref{eq:optBEC}) can be used to deduce (\ref{eq:ENGP}) and (\ref{eq:excGP}). 

\medskip

{\bf Factoring out the Bose-Einstein condensate.} 
The first step to show Theorem \ref{thm:bogo} consists in factoring out the Bose-Einstein condensate and focussing instead on its orthogonal excitations. Following \cite{LNSS}, we remark that every $\psi_N \in L^2_s (\L^N)$ can be written as 
\[ \psi_N = \alpha_0 \ph_0^{\otimes N} + \alpha_1 \otimes_s \ph_0^{\otimes (N-1)} + \dots + \alpha_N \]
with $\alpha_j \in L^2_\perp (\L)^{\otimes j}$, for $j =0, \dots , N$. Here $L^2_\perp (\L)$ denotes the orthogonal complement in $L^2 (\L)$ of the condensate wave function $\ph_0$. This remark allows us to introduce a unitary operator $U_N$, defined by $U_N \psi_N = \{ \alpha_0 , \alpha_1 , \dots  , \alpha_N \}$, mapping the $N$-particle Hilbert space $L^2_s (\L^N)$ onto the truncated Fock space \[ \cF_+^{\leq N} = \bigoplus_{j=0}^N L^2_\perp (\L)^{\otimes j},\] where we describe orthogonal excitations of the condensate. 

With $U_N$, we can define the excitation Hamiltonian $\cL_N = U_N H_N U_N^*$, acting on $\cF^{\leq N}_+$. To compute $\cL_N$, we switch to momentum space and introduce the formalism of second quantization. For momenta $p,q \in \L^*_+ = 2\pi \bZ^3 \backslash \{ 0 \}$, we find that 
\begin{equation}\label{eq:rules} \begin{split} 
U_N a_p^* a_q U_N^* &= a_p^* a_q, \\ U_N a_0^* a_0 U_N^* &= N- \cN_+ \\
U_N a_p^* a_0 U_N^* &= a_p^* \sqrt{N- \cN_+} =: \sqrt{N} b_p^* , \\ U_N a_0^* a_p U_N &= \sqrt{N- \cN_+} a_p =: \sqrt{N} b_p \end{split} \end{equation} 
where $\cN_+$ denotes the number of particles operator on $\cF_+^{\leq N}$ (measuring the number of excitations of the condensate) and where we defined modified creation and annihilation operators $b_p^*, b_p$ creating and annihilating excitations with momentum $p$ (on states with few excitations of the condensate, we expect $b_p^\sharp \simeq a_p^\sharp$; in contrast to the standard creation and annihilation operators, $b_p, b_p^*$ preserve the total number of particles and thus map $\cF_+^{\leq N}$ back to itself). Writing 
\[ H_N = \sum_{p\in 2\pi \bZ^3} p^2 a_p^* a_p + \frac{1}{2N} \sum_{p,q,r \in 2\pi \bZ^3} \widehat{V} (r/N) a_{p+r}^* a_q^* a_{q+r} a_p \]
and applying (\ref{eq:rules}), we obtain the following lemma.
\begin{lemma}
The excitation Hamiltonian $\cL_N$ has the form 
\begin{equation}\label{eq:cLN} \begin{split} \cL_N = \; &\frac{(N-1)}{2N} \widehat{V} (0) (N- \cN_+) + \frac{\widehat{V} (0)}{2N} \cN_+ (N - \cN_+) \\ &+  \sum_{p \in \Lambda^*_+} p^2 a_p^* a_p + \sum_{p \in \Lambda_+^*} \widehat{V} (p/N) a_p^* \frac{N-\cN_+-1}{N} a_p + \frac{1}{2} \sum_{p \in \Lambda^*_+} \widehat{V} (p/N) (b_p^* b_{-p}^* + \text{h.c.} ) \\ &+ \frac{1}{\sqrt{N}} \sum_{p,q \in \Lambda_+^* , p+q \not = 0} \widehat{V} (p/N) \left[ b_{p+q}^* a_{-p}^* a_q + a_q^* a_{-p} b_{p+q} \right]  \\ &+ \frac{1}{2N} \sum_{p,q \in \Lambda^*_+ , r \not = p , -q} \widehat{V} (r/N) a_{p+r}^* a_q^* a_p a_{q+r} \, .  \end{split} \end{equation} 
\end{lemma}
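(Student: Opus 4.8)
The proof is a direct, if somewhat lengthy, computation: we insert the transformation rules (\ref{eq:rules}) into the second-quantized expression for $H_N$ written just above the lemma. The kinetic part is immediate, since the summand with $p=0$ vanishes ($p^2=0$) and for $p\in\Lambda_+^*$ the first identity in (\ref{eq:rules}) gives $U_N a_p^* a_p U_N^* = a_p^* a_p$, producing $\sum_{p\in\Lambda_+^*} p^2 a_p^* a_p$. Every remaining term comes from the quartic interaction $\frac{1}{2N}\sum_{p,q,r}\widehat V(r/N)\,a_{p+r}^* a_q^* a_{q+r} a_p$, and the whole strategy is to split this sum according to how many of the four momenta $p+r$, $q$, $q+r$, $p$ carried by the operators vanish. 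Because the creation and annihilation momenta satisfy $(p+r)+q=(q+r)+p$, a configuration with exactly three zero momenta would force the fourth to vanish as well; hence only the cases of zero, one, two, or four vanishing momenta contribute.

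The extreme cases are the easiest. When none of the four momenta vanishes, the first rule in (\ref{eq:rules}) leaves the monomial unchanged, and after commuting the two (nonzero-momentum) annihilation operators one obtains the last, purely off-condensate quartic term. When all four momenta vanish we have $p=q=r=0$, so $a_0^* a_0^* a_0 a_0=(a_0^* a_0)(a_0^* a_0-1)$, and the second rule turns this into $\widehat V(0)(N-\cN_+)(N-\cN_+-1)/(2N)$. The case of exactly one vanishing momentum produces the cubic terms: the single $a_0$ or $a_0^*$ commutes with the nonzero-momentum operators, so it can be grouped with one of them into a bilinear $a_q^* a_0$ or $a_0^* a_q$, which the third and fourth rules convert into $\sqrt N\,b_q^*$ or $\sqrt N\,b_q$, while the remaining two nonzero operators transform trivially. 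The sub-cases with the zero in an annihilation slot yield $b^*$-type terms and those with the zero in a creation slot their hermitian conjugates; using that $\widehat V$ is radial, so $\widehat V(r/N)=\widehat V(-r/N)$, one recovers $\frac{1}{\sqrt N}\sum_{p,q}\widehat V(p/N)[b_{p+q}^* a_{-p}^* a_q + \text{h.c.}]$ after relabelling.

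The case of exactly two vanishing momenta is the richest and produces all quadratic terms. If both creation momenta vanish the monomial is $a_0^* a_0^* a_r a_{-r}=(a_0^* a_r)(a_0^* a_{-r})$, whose $U_N$-conjugate is $N b_r b_{-r}$, so that with the prefactor $\frac{1}{2N}$ it contributes $\frac12\widehat V(r/N) b_r b_{-r}$; the symmetric situation with both annihilation momenta zero gives the hermitian conjugate, and together they assemble into $\frac12\sum_p\widehat V(p/N)(b_p^* b_{-p}^*+\text{h.c.})$. If instead one creation and one annihilation momentum vanish, momentum conservation forces the two surviving momenta to coincide, so the monomial is diagonal; writing it as a product of number-conserving bilinears, e.g. $a_0^* a_p^* a_0 a_p=(a_p^* a_p)(a_0^* a_0)$, and using the second rule together with the multiplicativity of conjugation, one gets $a_p^* a_p(N-\cN_+)$. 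The diagonal sub-cases with $r\neq0$ carry the weight $\widehat V(p/N)$ and, since $a_p^*(N-\cN_+-1)a_p=a_p^* a_p(N-\cN_+)$, assemble into $\sum_p\widehat V(p/N)\,a_p^*\frac{N-\cN_+-1}{N}a_p$, while those with $r=0$ carry the weight $\widehat V(0)$ and give $\widehat V(0)\cN_+(N-\cN_+)/N$.

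The step requiring the most care, and the main obstacle, is precisely this bookkeeping of orderings and of the various $\widehat V(0)$ contributions. The rules (\ref{eq:rules}) hold for specific bilinear orderings, so before applying them every monomial must first be rewritten, via the canonical commutation relations, as a product of the number-conserving bilinears appearing there; one then exploits that $U_N AB U_N^*=(U_N A U_N^*)(U_N B U_N^*)$. The commutator corrections generated in this rewriting are exactly what produce the constant and $\cN_+$-dependent shifts, such as the $-1$ in $\frac{N-\cN_+-1}{N}$. In particular, the first line of (\ref{eq:cLN}) is not the image of any single case: it arises by collecting the four-zero scalar term $\widehat V(0)(N-\cN_+)(N-\cN_+-1)/(2N)$ with the two $\widehat V(0)$ diagonal contributions $\widehat V(0)\cN_+(N-\cN_+)/N$ from the two-zero case, which combine to $\frac{\widehat V(0)}{2N}(N-\cN_+)(N-1+\cN_+)=\frac{N-1}{2N}\widehat V(0)(N-\cN_+)+\frac{\widehat V(0)}{2N}\cN_+(N-\cN_+)$. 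Assembling the contributions of all cases yields (\ref{eq:cLN}).
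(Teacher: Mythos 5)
Your computation is correct and follows exactly the route the paper indicates (the paper offers no detailed proof beyond ``applying (\ref{eq:rules})''): split the quartic interaction according to how many of the four momenta vanish, rewrite each monomial as a product of the number-conserving bilinears of (\ref{eq:rules}) via the canonical commutation relations, and collect terms. Your bookkeeping of the $\widehat V(0)$ contributions, in particular the identity $\frac{\widehat V(0)}{2N}(N-\cN_+)(N-\cN_+-1)+\frac{\widehat V(0)}{N}\cN_+(N-\cN_+)=\frac{(N-1)}{2N}\widehat V(0)(N-\cN_+)+\frac{\widehat V(0)}{2N}\cN_+(N-\cN_+)$, and of the diagonal terms via $a_p^*(N-\cN_+-1)a_p=a_p^*a_p(N-\cN_+)$, checks out.
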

Conjugation with the unitary map $U_N$ can be interpreted as a rigorous version of Bogoliubov c-number substitution, where operators $a^*_0 , a_0$, creating and, respectively, annihilating a particle in the condensate are replaced by a factor 
$\sqrt{N}$. 

\medskip

{\bf Renormalized excitation Hamiltonian.} Following Bogoliubov's approach, we should now approximate the excitation Hamiltonian (\ref{eq:cLN}) by a quadratic operator, neglecting cubic and quartic contributions appearing in the last two lines. While this  step leads to the correct energy in mean-field type regimes \cite{S,LNSS}, it cannot be justified in the Gross-Pitaevskii limit. Cubic and quartic terms in (\ref{eq:cLN}) still contain important contributions to the energy, which cannot be neglected (in fact, quartic terms even contribute to (\ref{eq:LSYti}) at leading order). 

The point here is that conjugation with $U_N$ factors out products of $\ph_0$ but keeps correlations in the excitation vector $U_N \psi_N = \{ \alpha_0, \dots , \alpha_N \}$. To extract relevant contributions from cubic and quartic terms in (\ref{eq:cLN}), we need to factor out the correlation structure characterizing low-energy states. 

To model correlations we fix $\ell_0 > 0$ (small but independent of $N$) and we consider the ground state solution of the Neumann problem 
\begin{equation}\label{eq:neu1} \Big[ -\Delta + \frac{N^2}{2} V (N x) \Big] f_{N} (x) = \lambda_{N} f_{N} (x) \end{equation} 
on the ball $|x| \leq \ell_0$, normalized so that $f_{N} (x) = 1$, if $|x| = \ell_0$. We extend $f_{N} (x) = 1$, for $|x| \geq \ell_0$, and we define $\check{\eta} (x) = - N (1-f_N (x))$. 
To understand the choice of $\check{\eta}$, we remark that, heuristically,  
\begin{equation}\label{eq:heuri} \begin{split}  \prod_{i<j}^N f_N (x_i - x_j) &= \prod_{i<j}^N \big[1 + \frac{1}{N} \check{\eta} (x_i - x_j) \big] \\ &\simeq \prod_{i<j}^N e^{\frac{1}{N} \check{\eta} (x_i - x_j)}  = e^{\frac{1}{N} \sum_{i<j} \check{\eta} (x_i - x_j)} \\ &\simeq \Big[ U^*_N \, e^{\frac{1}{2} \sum_{p \in \L^*_+} \eta_p b^*_p b_{-p}^*}   \Omega \Big] (x_1, \dots , x_N) \end{split} \end{equation} 
where $\{ \eta_p \}_{p \in \L^*_+}$ are the Fourier coefficients of $\check{\eta}$. 
To justify the last step in (\ref{eq:heuri}), we expand the exponential map, we use  (\ref{eq:rules}) to compute the action of $U^*_N$, and we switch between momentum 
and  position space. For technical reasons it is more convenient, instead of working with $\exp (\frac{1}{2} \sum_{p \in \L^*_+} \eta_p b^*_p b_{-p}^*)$, to use unitary operators and to introduce a cutoff $\kappa_H > 0$, restricting the sum in the exponent to large momenta. Thus, we define the generalized Bogoliubov transformation 
\begin{equation}\label{eq:Teta} 
T_\eta = \exp \Big[ \frac{1}{2} \sum_{|p| > \kappa_H} \eta_p \big( b_p^* b_{-p}^* - b_p b_{-p} \big) \Big]  \, . 
\end{equation} 
Motivated by (\ref{eq:heuri}), we expect that conjugation with $T_\eta$ factors out two-body correlations described by the solution $f_N$ of (\ref{eq:neu1}). For this reason, we define the renormalized excitation Hamiltonian $\cG_N = T_\eta^* \cL_N T_\eta = T^*_\eta U_N H_N U^*_N T_\eta$. 

On states with few excitations of the condensate, we have $b^\sharp_p \simeq a^\sharp_p$ and we can expect that the action of $T_\eta$ is close to action of a standard Bogoliubov transformation, ie. 
\begin{equation}\label{eq:TaT} \begin{split} T_\eta^* a_p^* T_\eta \simeq \cosh (\eta_p) \, a_p^* + \sinh (\eta_p) \, a_{-p} \\ 
T_\eta^* a_p T_\eta \simeq \cosh (\eta_p) \, a_p + \sinh (\eta_p) \, a^*_{-p} \end{split} \end{equation} 
for all $p \in \L^*_+$ with $|p| > \kappa_H$. With these formulas, we can approximately compute the renormalized excitation Hamiltonian $\cG_N$. 
\begin{lemma} \label{lm:GN} 
Introducing the notation 
\begin{equation}\label{eq:KV} \cK = \sum_{p \in \L^*_+} p^2 a_p^* a_p, \qquad \cV_N = \frac{1}{2N} \sum_{p,q \in \L^*_+, r \not = -p,-q} \widehat{V} (r/N) a_{p+r}^* a_q^* a_{q+r} a_p \end{equation} 
for the kinetic and potential energy operators, we find that 
\begin{equation}\label{eq:cGN} 
\begin{split} \cG_N = \; & 4\pi \frak{a} (N- \cN_+) + \big[ \widehat{V} (0) - 4\pi \frak{a} \big] \cN_+ \frac{(N-\cN_+)}{N} \\ &+  \cK + \sum_{p \in \L^*_+} \widehat{V} (p/N) a_p^* a_p (1 - \cN_+ / N) + 4\pi \frak{a} \sum_{|p| \leq \kappa_H} \big[ b_p^* b_{-p}^* + b_p b_{-p} \big]  \\ &+ \frac{1}{\sqrt{N}} \sum_{p,q \in \L^*_+ : p + q \not = 0} \widehat{V} (p/N) \big[ b_{p+q}^* a_{-p}^* a_q + \text{h.c.} \big] + \cV_N + \cE_{\cG_N}  \end{split} \end{equation} 
where the error operator $\cE_{\cG_N}$ is such that  
\begin{equation}\label{eq:cE} \pm \cE_{\cG_N} \leq C \kappa_H^{-1} (\cK + \cV_N)  + C \kappa_H  \, .  \end{equation}  
\end{lemma}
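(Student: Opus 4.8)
The plan is to compute $\cG_N = T_\eta^* \cL_N T_\eta$ by conjugating the six groups of terms in (\ref{eq:cLN}) one at a time, isolating the leading contributions and collecting everything else into the error $\cE_{\cG_N}$. Two ingredients drive the computation. The first is a quantitative version of the approximate action (\ref{eq:TaT}) of $T_\eta$ on the modified creation and annihilation operators, together with a priori bounds ensuring that conjugation by $T_\eta$ changes $\cN_+$, $\cK$ and $\cV_N$ only by controlled amounts. The second is the key cancellation: the off-diagonal quadratic coefficient produced when one assembles the contributions of $\cK$, of the quadratic term $\frac12\sum_p\widehat V(p/N)(b_p^*b_{-p}^*+\text{h.c.})$ and of the cubic term is exactly the combination dictated by the Neumann scattering equation (\ref{eq:neu1}), which renormalizes the singular coefficient $\widehat V(p/N)$ into $4\pi\frak{a}$.

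First I would make (\ref{eq:TaT}) precise. Writing $B=\frac12\sum_{|p|>\kappa_H}\eta_p(b_p^*b_{-p}^*-b_pb_{-p})$ for the generator of $T_\eta$ in (\ref{eq:Teta}) and setting $b_p(s)=e^{-sB}b_pe^{sB}$, I would differentiate in $s$, use the almost-canonical commutation relations of the modified operators (from (\ref{eq:rules}), $[b_p,b_q^*]=\delta_{p,q}(1-\cN_+/N)-a_q^*a_p/N$ and $[b_p,b_q]=0$), and integrate the resulting Duhamel system. The leading part reproduces $\cosh(\eta_p)b_p+\sinh(\eta_p)b_{-p}^*$, while each correction carries an explicit factor $\cN_+/N$ or $1/N$ and is bounded in operator norm on states with few excitations. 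In parallel I would establish, by the same differentiate-and-Gronwall scheme, the moment bounds $T_\eta^*(\cN_++1)^kT_\eta\le C_k(\cN_++1)^k$ and the relative bound $T_\eta^*(\cK+\cV_N)T_\eta\le C(\cK+\cV_N)+C\kappa_H$; these guarantee that the remainder operators generated below are genuinely lower order and compatible with (\ref{eq:cE}).

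With these tools I would conjugate the terms of (\ref{eq:cLN}) in turn. The number-type terms on the first line and the diagonal interaction give, after (\ref{eq:TaT}), the first two lines of (\ref{eq:cGN}) and the diagonal piece $\sum_p\widehat V(p/N)a_p^*a_p(1-\cN_+/N)$, up to errors controlled by the moment bounds. The core is the off-diagonal quadratic coefficient, to which three sources contribute: conjugating $\cK$ produces $\sum_{|p|>\kappa_H}p^2\sinh(\eta_p)\cosh(\eta_p)(b_p^*b_{-p}^*+\text{h.c.})\approx\sum_{|p|>\kappa_H}p^2\eta_p(b_p^*b_{-p}^*+\text{h.c.})$; the term $\frac12\sum_p\widehat V(p/N)(b_p^*b_{-p}^*+\text{h.c.})$ survives with $\cosh,\sinh$-modified coefficient; and commuting $B$ through the cubic term yields $\frac1{2N}\sum_p\big(\sum_q\widehat V((p-q)/N)\eta_q\big)(b_p^*b_{-p}^*+\text{h.c.})$. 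Reading (\ref{eq:neu1}) in Fourier space gives $p^2\eta_p+\frac12\widehat V(p/N)+\frac1{2N}\sum_q\widehat V((p-q)/N)\eta_q\approx 0$ for $|p|>\kappa_H$, so the three contributions cancel on the high-momentum sector where $T_\eta$ acted; for $|p|\le\kappa_H$, where $\eta_p=0$ and $\cK$ contributes nothing, the surviving combination $\frac12\widehat V(p/N)+\frac1{2N}\sum_q\widehat V((p-q)/N)\eta_q\approx\frac12\widehat{(Vf_N)}(p/N)\approx 4\pi\frak{a}$, by the scattering-length identity $\widehat{(Vf_N)}(0)=8\pi\frak{a}$ of (\ref{eq:scaV}). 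This produces exactly the fourth term of (\ref{eq:cGN}), and conjugating the cubic term and $\cV_N$ reproduces them plus commutators that I absorb into $\cE_{\cG_N}$.

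The hard part will be the bookkeeping and estimation of precisely these cross-terms, and in particular showing that all remainders obey $\pm\cE_{\cG_N}\le C\kappa_H^{-1}(\cK+\cV_N)+C\kappa_H$ as in (\ref{eq:cE}). The cutoff $\kappa_H$ plays a double role that makes this work: on the sector $|p|>\kappa_H$ one has $\sum_{|p|>\kappa_H}\eta_p^2=O(\kappa_H^{-1})$ (since $\eta_p\sim -4\pi\frak{a}/p^2$), which by Cauchy--Schwarz lets the operator-valued remainders be absorbed into $\kappa_H^{-1}(\cK+\cV_N)$, while the leftover c-number contributions assemble into the constant $C\kappa_H$. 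The most delicate estimates concern the cubic term, whose three operators do not commute cleanly and must be bounded uniformly in $N$ (typically by splitting off $\cN_+$-factors and using the available kinetic energy), and the verification that the scattering-equation cancellation is sharp enough that the mismatch between the finite-volume Neumann solution $f_N$ of (\ref{eq:neu1}) and the true zero-energy profile, as well as the replacement of $\frac12\widehat{(Vf_N)}(p/N)$ by $4\pi\frak{a}$ for $|p|\le\kappa_H$, only enter at the order permitted by (\ref{eq:cE}).
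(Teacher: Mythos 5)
Your proposal follows essentially the same route as the paper: conjugate \eqref{eq:cLN} term by term using a quantitative version of \eqref{eq:TaT} together with moment bounds for $\cN_+$, $\cK$, $\cV_N$ under $T_\eta$, exhibit the cancellation of the high-momentum off-diagonal quadratic coefficient through the Fourier-transformed Neumann scattering equation \eqref{eq:neu1}, identify the surviving low-momentum coefficient with $4\pi\frak{a}$ via \eqref{eq:scaV}, and absorb remainders into $\cE_{\cG_N}$ using $\sum_{|p|>\kappa_H}\eta_p^2=O(\kappa_H^{-1})$. One bookkeeping slip: the convolution contribution $\frac{1}{2N}\sum_r\widehat V(r/N)\eta_{p+r}$ to the off-diagonal quadratic coefficient does not come from commuting the generator through the cubic term (that commutator remains cubic, up to linear terms, after normal ordering); it arises from the conjugation of the quartic interaction via a double contraction, as in \eqref{eq:T-V} of the paper. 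This does not affect the strategy or the final cancellation \eqref{eq:cancel2}, but you would need to reassign that term when carrying out the computation.
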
 
Comparing (\ref{eq:cGN}) with the corresponding expression (\ref{eq:cLN}) for the excitation Hamiltonian $\cL_N$, we observe two main differences. First of all, the vacuum expectation of $\cG_N$ is given by $4\pi \frak{a} N$. To leading order, we know from (\ref{eq:LSYti}) that this is the correct ground state energy. Secondly, the off-diagonal quadratic term has been renormalized, meaning that the singular potential $\widehat{V} (p/N)$, decaying only for $|p|$ of order  $N$, has been now replaced by $4\pi \frak{a} \chi (|p| \leq \kappa_H)$, which decays already on momenta of order one (the cutoff $\kappa_H$ will be chosen large but fixed, independent of $N$). 

\medskip

{\bf Sketch of proof of Lemma \ref{lm:GN}}. Let us briefly explain the mechanism leading from (\ref{eq:cLN}) to (\ref{eq:cGN}). With (\ref{eq:TaT}) and approximating $\cosh \eta_p \simeq 1$, $\sinh \eta_p \simeq \eta_p$ for all $|p| > \kappa_H$ (higher order terms decay faster in $p$, hence they produce smaller errors), the conjugation of the kinetic energy operator gives 
\begin{equation}\label{eq:T-K} \begin{split} T_\eta^*\Big[  \sum_{p \in \L^*_+} p^2 a_p^* a_p \Big] T_\eta &\simeq \sum_{p \in \L_+^*} p^2  (\cosh (\eta_p) a_p^* + \sinh (\eta_p) a_{-p}) ( \cosh (\eta_p) a_p + \sinh (\eta_p ) a_{-p}^*) \\
&\simeq \sum_{p \in \L^*_+} p^2  a_p^* a_p + \sum_{|p| > \kappa_H} p^2 \eta_p (a_p^* a_{-p}^* + a_p a_{-p} \big) + \sum_{|p| > \kappa_H} p^2 \eta_p^2
 \end{split} \end{equation} 
(in this heuristic discussion, we exchange freely modified creation and annihilation operators with standard creation and annihilation operators). Analogously, we can compute the action of $T_\eta$ on the off-diagonal quadratic term in the second line on the r.h.s of (\ref{eq:cLN}). We find 
\begin{equation} \label{eq:T-L2} T_\eta^* \Big[ \frac{1}{2} \sum_{p\in \L^*_+} \widehat{V} (p/N) \big( a_p^* a_{-p}^* + \text{h.c.} \big) \Big] T_\eta \simeq \frac{1}{2} \sum_{p \in \L^*_+} \widehat{V} (p/N) (a_p^* a_{-p}^* + \text{h.c.} \big) + \frac{1}{2} \sum_{|p| \leq \kappa_H} \widehat{V} (p/N) \eta_p \end{equation} 
up to terms that can be included in the error operator $\cE_{\cG_N}$ in (\ref{eq:cGN}). Finally, we consider the conjugation of the quartic term on the r.h.s. of (\ref{eq:cLN}). In this case, the computation is a bit longer. At the end we find, up to irrelevant error terms, 
\begin{equation}\label{eq:T-V} \begin{split} 
T_\eta^* \Big[ \frac{1}{2N} &\sum_{p,q,r \in \L^*_+} \widehat{V} (r/N) a_{p+r}^* a_q^* a_{q+r} a_p  \Big] T_\eta \\  \simeq \; &\frac{1}{2N} \sum_{p,q,r \in \L^*_+}  \widehat{V} (r/N) a_{p+r}^* a_q^* a_{q+r} a_p  \\ &+ \frac{1}{2N} \sum_{\substack{q,r \in \L^*_+ : \\ |q+r| > \kappa_H}} \widehat{V} (r/N) \eta_{q+r} \big( a^*_q a^*_{-q} + a_{q} a_{-q} \big) + \frac{1}{2N} \sum_{\substack{q,r \in \L^*_+ : \\ |q|, |q+r| > \kappa_H}} \widehat{V} (r/N) \eta_{q+r} \eta_q
\end{split} \end{equation} 
Combining the quadratic terms generated in (\ref{eq:T-K}), (\ref{eq:T-L2}) and (\ref{eq:T-V}) we obtain, separating contributions associated with $|p| > \kappa_H$ and with $|p| \leq  \kappa_H$,  
\begin{equation}\label{eq:cancel2} \begin{split} 
\sum_{|p| > \kappa_H} \Big[ p^2 \eta_p &+ \frac{1}{2} \widehat{V} (p/N) + \frac{1}{2N} \sum_{\substack{r \in \L^* : \\ |p+r| > \kappa_H}} \widehat{V} (r/N) \eta_{p+r} \Big] \big( a_p^* a_{-p}^* + a_{-p} a_{p} \big) \\ &+ \frac{1}{2} \sum_{|p| \leq \kappa_H} \Big[ \widehat{V} (p/N) + \frac{1}{N} \sum_{\substack{r \in \L^* : \\ |p+r| > \kappa_H}} \widehat{V} (r/N) \eta_{p+r} \Big] \big( a_p^* a^*_{-p} + a_{-p} a_p \big) \, .   \end{split}  \end{equation} 
In the first line, the expression in the parenthesis (after extending the sum over $r$ to complete the convolution of $\widehat{V}$ and $\eta$, at the expenses of a small error) reconstructs exactly the l.h.s. of (\ref{eq:neu1}); using the identity (\ref{eq:neu1}), we can prove that this term is small and can be neglected. On the second line, on the other hand, we obtain, after completing the sum and switching to position space, the integral (\ref{eq:scaV}) (using the fact that the difference between $f_N$, defined in (\ref{eq:neu1}), and the solution of the zero energy scattering equation (\ref{eq:0-en}) is small). This explains the emergence of the renormalized off-diagonal quadratic term in (\ref{eq:cGN}). Similarly, combining the constant terms on the r.h.s. of (\ref{eq:T-K}), (\ref{eq:T-L2}), (\ref{eq:T-V}) with the constant term in (\ref{eq:cLN}), we obtain, using again the eigenvalue equation (\ref{eq:neu1}),  
\[ \begin{split} \frac{N}{2} \widehat{V} (0) + \sum_{|p| < \kappa_H} \Big[ p^2 \eta_p^2 + \widehat{V}(p/N) \eta_p &+ \frac{1}{2N} \sum_{\substack{r \in \L^* : \\\ |p+r| > \kappa_H} }\widehat{V} (r/N) \eta_{p+r} \Big] \\ &\simeq \frac{N}{2} \int V(x) \big[1 + \frac{1}{N} \check{\eta} (x) \big] dx \simeq 4\pi \frak{a} N ,  \end{split} \]
which explains the constant term on the r.h.s. of (\ref{eq:cGN}). More details on the proof of Lemma \ref{lm:GN} can be found in \cite[Sect. 7]{BBCS4}.

\medskip

{\bf Cubic renormalization.} Although the renormalized excitation Hamiltonian (\ref{eq:cGN}) looks simpler to handle, compared with (\ref{eq:cLN}), it is not yet coercive enough to prove Bose-Einstein condensation (it would be, if we assumed the interaction potential $V$ to be small enough, because then we could control the r.h.s. of (\ref{eq:cGN}) with kinetic and  potential energy operators, which are positive; this approach has been considered in \cite{BBCS1}). The main problem appears to be the cubic term on the r.h.s. of (\ref{eq:cGN}), which has been left unchanged by the action of the Bogoliubov transformation $T_\eta$. To regularize the cubic term, we need a second renormalization of the excitation Hamiltonian, this time through the exponential of a cubic operator. We define
\begin{equation}\label{eq:A} A = \frac{1}{\sqrt{N}} \sum_{|r| > \kappa_H, |v|< \kappa_L} \eta_r \big[ b_{r+v}^* a_{-r}^* a_v  - \text{h.c.} \big]  \end{equation} 
with an additional cutoff $\kappa_L < \kappa_H$ and we introduce the unitary operator 
$S = e^A$. With $S$, we construct the (twice) renormalized excitation Hamiltonian $\cJ_N = S^* \cG_N S$. 
\begin{lemma}\label{lm:JN}
We have  
\begin{equation}\label{eq:cJN} \begin{split} \cJ_N = \; &4\pi \frak{a} (N- \cN_+) + 4\pi \frak{a} \frac{(N-\cN_+)}{N} \\ &+ \cK + 8\pi \frak{a} \sum_{|p| \leq \kappa_H} a_p^* a_p \frac{(N-\cN_+)}{N} + 4\pi \frak{a} \sum_{|p| \leq \kappa_H} \big( b_p^* b_{-p}^* + b_p b_{-p} \big) \\ &+ \frac{8\pi \frak{a}}{\sqrt{N}} \sum_{|p| \leq \kappa_H , q\in \L^*_+} \big[ b_{p+q}^* a_{-p}^* a_q + \text{h.c.} \big] + \cV_N + \cE_{\cJ_N} \end{split} \end{equation} 
where $\cE_{\cJ_N}$ satisfies $\pm \cE_{\cJ_N} \leq (1/4) (\cK + \cV_N) + C$, if the cutoffs $\kappa_L < \kappa_H$ are chosen appropriately.
\end{lemma}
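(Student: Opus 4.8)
The plan is to obtain $\cJ_N = S^* \cG_N S$ from the expression (\ref{eq:cGN}) by expanding the conjugation with $S = e^A$ through the Duhamel formula
\[ S^* X S = X + \int_0^1 ds\, e^{-sA}\,[X,A]\,e^{sA}, \]
applied to each term of $\cG_N$ and iterated to generate the nested commutators $\mathrm{ad}_A^{(k)}(X)$. Before computing these, I would establish the a priori stability estimates that make the expansion useful, namely that conjugation with $S$ distorts the number of excitations and the energy only by controlled amounts: operator inequalities of the type $S^*(\cN_+ + 1)^k S \le C_k (\cN_+ + 1)^k$ and $S^*(\cK+\cV_N)S \le C(\cK+\cV_N+\cN_+ + 1)$. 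These follow from the smallness and $|r|^{-2}$ decay of the coefficients $\eta_r$ for $|r|>\kappa_H$, together with the cutoffs $|r|>\kappa_H$ and $|v|<\kappa_L$ built into $A$ in (\ref{eq:A}); they ensure that only the first one or two commutators produce genuine contributions, the remainder being absorbable into the error $\cE_{\cJ_N}$.

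The core of the computation is the renormalization of the cubic term, which the quadratic transformation $T_\eta$ left unchanged in (\ref{eq:cGN}). This is to be carried out in exact parallel with the quadratic renormalization of Lemma~\ref{lm:GN}, now with the cubic generator $A$ playing the role that $T_\eta$ played there. Concretely, I would compute the leading commutators of $A$ with the kinetic operator $\cK$, with the quadratic terms, and with the quartic potential $\cV_N$; each regenerates a cubic expression of the form $N^{-1/2} b_{p+q}^* a_{-p}^* a_q$, carrying coefficients built from $p^2\eta$, from $\widehat V(\cdot/N)$, and from the convolution $\widehat V(\cdot/N)*\eta$. The point is that these supply exactly the combination appearing on the left-hand side of the scattering equation (\ref{eq:neu1}): on the high-momentum sector where $A$ is supported, they reconstruct (\ref{eq:neu1}) and cancel, whereas the surviving low-momentum cubic term has its coefficient renormalized from the bare $\widehat V(p/N)$ to the value $8\pi\frak{a}$ fixed by (\ref{eq:scaV}) — the analogue of the replacement $\widehat V(p/N) \to 4\pi\frak{a}$ in Lemma~\ref{lm:GN}, the extra factor of two reflecting the absence of the prefactor $1/2$ in the cubic term. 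This produces the cutoff cubic term on the third line of (\ref{eq:cJN}).

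The same mechanism accounts for the remaining terms. The commutator of the original cubic term in (\ref{eq:cGN}) with $A$ generates diagonal, number-type quadratic contributions; combined with the diagonal potential term $\sum_p \widehat V(p/N) a_p^* a_p (1-\cN_+/N)$ of (\ref{eq:cGN}) and renormalized through (\ref{eq:neu1}) and (\ref{eq:scaV}), these produce the term $8\pi\frak{a}\sum_{|p|\le\kappa_H} a_p^* a_p (N-\cN_+)/N$ on the second line of (\ref{eq:cJN}), while the off-diagonal quadratic term $4\pi\frak{a}\sum_{|p|\le\kappa_H}(b_p^* b_{-p}^* + b_p b_{-p})$, already renormalized in Lemma~\ref{lm:GN}, is left essentially unchanged by $S$. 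Lower-order commutators, together with the rewriting $S^*\cN_+ S \approx \cN_+$ afforded by the a priori bounds, adjust the constant and condensate-depletion terms on the first line of (\ref{eq:cJN}).

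Everything else is to be collected into $\cE_{\cJ_N}$ and bounded by $(1/4)(\cK+\cV_N)+C$: the higher nested commutators $\mathrm{ad}_A^{(k)}$ with $k\ge 2$, the discrepancy between $f_N$ and the infinite-volume scattering solution, the eigenvalue $\lambda_N$ and finite-ball corrections in (\ref{eq:neu1}), the image under $S$ of the error $\cE_{\cG_N}$, and the terms arising from commuting the modified operators $b_p^\sharp$ past $a_p^\sharp$. I expect this error estimate to be the main obstacle. Unlike a quadratic Bogoliubov transformation, the cubic generator $A$ does not act by an explicit rotation of the creation and annihilation operators, so each commutator with $\cV_N$ regenerates cubic and quartic structures that are not manifestly small; taming them requires the decay of $\eta$ on $|r|>\kappa_H$, operator bounds expressing products of creation and annihilation operators through $\cN_+$, $\cK$ and $\cV_N$, and Cauchy--Schwarz in momentum. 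Securing the precise relative bound with constant $1/4$ — small enough that $\cE_{\cJ_N}$ can later be absorbed while keeping $\cK+\cV_N$ coercive — forces a careful choice of the cutoffs $\kappa_L\ll\kappa_H$, both large but fixed independently of $N$, and a quantitative control of all constants throughout.
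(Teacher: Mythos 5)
Your proposal follows essentially the same route as the paper: expanding $\cJ_N = e^{-A}\cG_N e^{A}$ in nested commutators, keeping only the first and second ones, letting the cubic contributions of $[\cK,A]$ and the normal-ordered part of $[\cV_N,A]$ combine with the cubic term of $\cG_N$ via the scattering equation (\ref{eq:neu1}) to produce the renormalized cubic term (the same cancellation mechanism as in (\ref{eq:cancel2})), and absorbing everything else --- including the growth estimates for conjugation with $S$ that you correctly flag as prerequisites --- into $\cE_{\cJ_N}$. This matches the argument sketched in the paper and carried out in detail in \cite[Sect.~8]{BBCS4}.
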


\medskip

{\bf Sketch of proof of Lemma \ref{lm:JN}.} We briefly discuss the derivation of (\ref{eq:cJN}). In contrast with the generalized Bogoliubov transformation $T_\eta$ (whose action can be approximately computed using the formulas (\ref{eq:TaT})), there is no explicit expression for the action of $S$. For this reason, we compute the renormalized excitation Hamiltonian $\cJ_N$ through a commutator  expansion having the form 
\[ \cJ_N = e^{-A} \cG_N e^A \simeq \cG_N + [ \cG_N , A] + \frac{1}{2} [ [\cG_N , A] , A]  + \dots  \, .  \]
Fortunately, it turns out that only the first and the second commutator produce relevant contributions, everything else can be neglected. Being $A$ cubic in creation and annihilation operators, its commutator with the kinetic energy operator, ie. $[\cK, A]$, is going to be cubic in creation and annihilation operators (because of the canonical commutation relation $[a_p , a_q^*] = \delta_{pq}$ and of the similar relation satisfied by the modified creation and annihilation operators). On the other hand, the commutator with the potential energy operator, $[\cV_N ,A]$ is quintic in creation and annihilation operators. Some of the contributions to this last commutator, however, are not in normal order. When we rearrange them into normal order, we produce cubic terms. The important observation, then, is that with the correct choice of the cubic operator $A$, the cubic contributions emerging from the commutators of $A$ with kinetic and potential energy operators combine with the cubic term, let us denote it by $\cC_N$, in $\cG_N$ and produce the renormalized cubic term appearing in (\ref{eq:cJN}). The algebra here is quite similar to the one discussed in (\ref{eq:cancel2}) for the cancellation in the off-diagonal quadratic terms (this is the reason why the coefficients in the cubic operator (\ref{eq:A}) are the same as the coefficients  appearing in the Bogoliubov transformation $T_\eta$; physically, this reflects the idea that correlations are just produced by two-body scattering events). 

While $[\cK , A]$ and $[\cV_N , A]$ combine with the cubic term $\cC_N$ in $\cG_N$ to produce the renormalized cubic term in (\ref{eq:cJN}), the commutator $[\cC_N , A]$ and the double commutator $[[\cK, A], A]$ are quartic in creation and annihilation operators. Similarly, the double commutator $[[\cV_N , A],A]$ is of order six (in this heuristic argument, we do not distinguish between standard and modified creation and annihilation operators). When we rearrange these terms in normal order, we obtain new quadratic and constant terms, which renormalize quadratic and constant terms on the r.h.s. of (\ref{eq:cGN}) to produce quadratic and constant terms in (\ref{eq:cJN}). More details on the proof of Lemma \ref{lm:JN} can be found in \cite[Sect. 8]{BBCS4}.

\medskip

{\bf Proof of condensation.} Observing the expression on the r.h.s. of (\ref{eq:cJN}), we remark that $\cJ_N$ has essentially the same form as the original excitation Hamiltonian (\ref{eq:cLN}), but with the singular interaction $\widehat{V} (p/N)$ (decaying only for very large momenta of order $N$) replaced by the regularized potential $\widehat{\nu}_p  = 8\pi \frak{a} \, \chi (|p| \leq \kappa_H)$. The only difference is that, in (\ref{eq:cJN}), the quartic term $\cV_N$ still depends on the original interaction. To complete the renormalization, we could conjugate $\cJ_N$ with a last unitary operator, given this time by the exponential of a quartic expression in creation and annihilation operators; this strategy has been considered in \cite{ABS}. 

In \cite{BBCS4}, on the other hand, a simpler strategy has been developed to prove Bose-Einstein condensation directly from (\ref{eq:cJN}), without quartic renormalization. This strategy is based on localization in the number of particles, a technique first introduced by Lieb and Solovej in \cite{LSol} (and then adapted in \cite{LNSS} to a setting similar to the one we are considering here). The idea is as follows. To prove a lower bound for $\cJ_N$ we can discard in (\ref{eq:cJN}) the positive quartic term $\cV_N$ (after using some of it to control part of the error $\cE_{\cJ_N}$). Furthermore, we can insert, by hand, the quartic term 
\begin{equation}\label{eq:V-ren}  \frac{1}{2N} \sum_{p,q \in \L^*_+ , r \not = -p , -q} \widehat{\nu}_r  \, a_{p+r}^* a_q^* a_{q+r} a_p  \end{equation} 
associated with the regularized potential; this produces an error that can be bounded (since the sum in (\ref{eq:V-ren}) runs only over $|r| < \kappa_H$, with $\kappa_H$ independent of $N$) by $C \cN_+^2 /N$. Comparing with (\ref{eq:cLN},) we find 
\begin{equation}\label{eq:cJN2} \cJ_N \geq U_N H_N^\text{ren} U_N^* - \frac{C}{N} \cN_+^2  - \frac{1}{2} \cK - C \end{equation} with the renormalized many-body Hamiltonian (here $\nu$ is the function on $\L$, with Fourier coefficients $\widehat{\nu}_p$ defined above) 
\begin{equation}\label{eq:HNren} H_N^\text{ren} = \sum_{j=1}^N -\Delta_{x_j} + \frac{1}{N} \sum_{i<j}^N \nu (x_i - x_j)  \, .  \end{equation} 

Using the positivity of $\widehat{\nu}$, we have 
\[ \begin{split} 
 0 &\leq \int_{\L^2} dx dy \, \nu (x - y) \Big[ \sum_{j=1}^N \delta (x- x_j) - N \Big] \Big[ \sum_{i=1}^N \delta (y - x_j) - N \Big] \\ &= \sum_{i,j= 1}^N \nu (x_i - x_j) - N^2 \widehat{\nu}_0  = 2 \sum_{i<j}^N \nu (x_i - x_j) + N \nu (0) - N^2 \widehat{\nu}_0  \end{split} \]
which implies (with $\widehat{\nu}_0  = 4\pi \frak{a}$ and $\nu (0)| \leq C$) that 
\[ H_N^\text{ren}  \geq 4\pi \frak{a} N + \sum_{j=1}^N ( -\Delta_{x_j} ) - C \]
and thus, from (\ref{eq:cJN2}), that 
\begin{equation}\label{eq:cM-fin} \cJ_N \geq 4\pi \frak{a} N + \frac{1}{2} \cK - \frac{C}{N} \cN_+^2 - C \, .  \end{equation} 
In order to control the negative term, quadratic in $\cN_+$, we choose smooth $g, h : \bR \to [0;1]$ with $g (s) = 1$ for all $s \leq 1$, $g (s) = 0$ for $s > 2$, such that $g^2 + h^2 = 1$. For $\delta > 0$ small enough, we introduce the notation $g_\delta = g (\cN_+ / \delta N)$, $h_\delta = h (\cN_+ / \delta N)$. Up to errors that are small in the limit $N \to \infty$, we can approximate
 \begin{equation} \label{eq:loc} \cJ_N \simeq g_\delta  \cJ_N g_\delta +  h_\delta  \cJ_N h_\delta  \, .  \end{equation} 
On the range of $g_\delta$, we have $\cN_+ < 2\delta N$. From (\ref{eq:cM-fin}), we find, for $\delta > 0$ small enough, 
 \begin{equation}\label{eq:bd-g}  
 \begin{split} g_\delta \cJ_N g_\delta  &\geq  g_\delta \big[ 4\pi \frak{a} N + \frac{1}{2} \cK - C\delta \cN_+ - C \big] g_\delta \\ &\geq \big[ 4\pi \frak{a} N + \frac{1}{4} \cK - C \big] g_\delta^2 \geq \big[ 4\pi \frak{a} N + c \cN_+ - C \big] g_\delta^2  \end{split} \end{equation}
for a $c> 0$ sufficiently small. On the range of $h_\delta$, on the other hand, we have $\cN_+ > \delta N$; as a consequence, here we cannot have complete Bose-Einstein condensation. Applying (\ref{eq:BEC1}), in the form proven in \cite{LS2,NRS}, we conclude by contradiction that there exists $c > 0$ such that 
 \[ h_\delta \cJ_N h_\delta \geq (4\pi \frak{a} + c) N h_\delta^2 \geq \big[ 4\pi \frak{a} N + c \, \cN_+ \big] h_\delta^2 .\] From (\ref{eq:loc}) and (\ref{eq:bd-g}), this implies that, for $c > 0$ small enough, 
 \[ \cJ_N \geq 4\pi \frak{a} N + c\,  \cN_+ - C  \, .  \]
 
 Suppose now that $\psi_N \in L^2_s (\L^N)$ is a sequence of approximate ground states, satisfying  
\begin{equation}\label{eq:appro-gs} \langle \psi_N, H_N \psi_N \rangle \leq 4\pi \frak{a} N + \zeta \end{equation} 
and let $\xi_N = S T_\eta U_N \psi_N$ denote the corresponding (renormalized) excitation vectors. Then, using (\ref{eq:cM-fin}) and the inequality $\cN_+ \leq C \cK$ (due to the spectral gap of the kinetic energy), we find  
\[ \begin{split} \langle \xi_N , \cN_+ \xi_N \rangle &\leq C \langle \xi_N, \cK  \, \xi_N \rangle \\ &\leq C \langle \xi_N , (\cM_N - 4\pi \frak{a} N ) \xi_N \rangle + C \\ &= C \langle \psi_N, (H_N - 4\pi \frak{a} )\psi_N \rangle + C \leq C (\zeta + 1)  \, . \end{split} \]
Since conjugation with $T_\eta$ and $S$ can only increase the expected number of excitations by a constant factor, we conclude that 
\begin{equation}\label{eq:BEC3} C (\zeta + 1) \geq \langle U_N \psi_N, \cN_+ U_N \psi_N \rangle = N - \langle \psi_N, a_0^* a_0 \psi_N \rangle = N - \langle \ph_0, \gamma_N \ph_0 \rangle \end{equation} 
which implies Bose-Einstein condensation with optimal rate, as stated in (\ref{eq:optBEC}).

It is interesting to compare the renormalization procedure based on conjugation with unitary operators that we used to show (\ref{eq:BEC3}) with Dyson's Lemma (Lemma \ref{lm:dyson}). Both techniques can be used to replace a singular interaction (large size, small range) with a regularized potential (smaller size, larger range) having the same scattering length. A clear advantage of Dyson's lemma is its simplicity. Also the renormalization procedure has some advantages, which play an important role in the proof of (\ref{eq:BEC3}). First of all, with the renormalization procedure, there is almost no loss of kinetic energy (some kinetic energy is needed to bound the error $\cE_{\cJ_N}$; what is important, however, is that although we only use a little bit of kinetic energy, we nevertheless construct the full renormalized potential, producing the energy $4\pi \frak{a} N$ on the r.h.s. of (\ref{eq:cM-fin})). Another advantage of the renormalization procedure is that it leads to a genuine many-body interaction, as the one appearing in (\ref{eq:HNren}). Dyson's lemma, on the other hand, works on the one-particle level and generates only an interaction between nearest neighbours. 
 
 \medskip
 
 {\bf A-priori bounds.} Eq. (\ref{eq:BEC3}) gives optimal control on the number of excitations of the Bose-Einstein condensate in approximate ground states satisfying 
 (\ref{eq:appro-gs}). In turn, inserting in (\ref{eq:cJN}), this implies a bound for the energy of the excitations. If we strengthen (\ref{eq:appro-gs}), assuming that \begin{equation}\label{eq:psichi} \psi_N = \chi (H_N \leq 4 \pi \frak{a} N + \zeta) \psi_N \, ,
 \end{equation}  
we can also establish a-priori estimates for the expectation of higher powers of the number of particles operator and of their product with kinetic and potential energy operators. More precisely, for a sequence $\psi_N \in L^2_s (\L^N)$ satisfying (\ref{eq:psichi}), it is possible to show that, for every $k \in \bN$, there exists $C > 0$ such that the excitation vector $\xi_N = T_\eta U_N \psi_N$ satisfies  
\begin{equation}\label{eq:apri} \big\langle \xi_N , (\cK + \cV_N + 1) ( \cN_+ + 1)^k \xi_N \rangle \leq C (\zeta + 1)^{k+1} \, .  \end{equation} 
In fact, the same estimate holds true if we replace $\xi_N$ by $S \xi_N = S T_\eta U_N \psi_N$, since cubic renormalization can only increase number and energy of excitations by constant factors, independent of $N$ (on the contrary, conjugation with $T_\eta$ changes the energy by order $N$). The proof of (\ref{eq:apri}) can be found in \cite[Prop. 4.1]{BBCS3}; we skip here the details.

\medskip

{\bf Ground state energy and excitation spectrum.} With the help of the strong a-priori bounds (\ref{eq:apri}), we can go back to the renormalized excitation Hamiltonian (\ref{eq:cGN}), defined after (\ref{eq:Teta}). As we did in (\ref{eq:cGN}), we can compute again $\cG_N$, this time keeping track of all terms, whose expectation does not vanish in the limit $N \to \infty$ (in (\ref{eq:cGN}), on the other hand, order one contributions are included in the error $\cE$). We find  
\[ \cG_N = C_N + Q_N + \cC_N + \cV_N + \delta_{\cG_N} \]
where $C_N$ is a constant term, $Q_N$ is quadratic, $\cC_N$ is cubic in creation and annihilation operators, and where $\delta_{\cG_N}$ is an error term, satisfying 
\begin{equation}\label{eq:cE-G} \pm \delta_{\cG_N} \leq \frac{C}{\sqrt{N}} (\cH_N + 1) (\cN_+ + 1)^2 \, .  \end{equation} 
From the a-priori estimates we know that the contribution of the error $\delta_{\cG_N}$ is negligible, in the limit of large $N$, on low-energy states. 

As already discussed in the proof of condensation, the contribution of the cubic term $\cC_N$ is relevant, it cannot be neglected. To get rid of the cubic term, we proceed as in the proof of condensation and perform a cubic renormalization. We define $\cJ'_N = e^{-A'} \cG_N e^{A'}$, for a cubic operator $A'$, similar to (\ref{eq:A}) (but not identical; for technical reasons, we find it convenient to choose 
$A'$ slightly different from the operator $A$ that we used in the proof of condensation). 
Computing $\cJ'_N$ up to errors that vanish in the limit $N \to \infty$, we find 
\begin{equation}\label{eq:cub-bog} \cJ'_N = C'_N + Q'_N + \cV_N + \delta_{\cJ'_N} \end{equation} 
where $C'_N$ and $Q'_N$ are new (renormalized) constant and cubic terms and where, similarly to (\ref{eq:cE-G}), 
\begin{equation}\label{eq:cE-bd}  \pm \delta_{\cJ'_N} \leq \frac{C}{N^{1/4}} (\cH_N + 1 ) ( \cN_+ + 1)^2  \, .  \end{equation} 
A part from the positive quartic term $\cV_N$, the Hamiltonian $\cJ'_N$ is essentially quadratic in creation and annihilation operators. Thus, it can be (approximately) diagonalized by a (generalized) Bogoliubov transformation $T_\tau$ (defined similarly as in (\ref{eq:Teta}), but of course with appropriate coefficients $\tau_p$). We define the final diagonalized excitation Hamiltonian $\cM_N = T_\tau^* \cJ'_N T_\tau$. We find 
\begin{equation}\label{eq:cMN} \begin{split}  \cM_N = \; &4\pi \frak{a} (N-1) + e_\Lambda \frak{a}^2 \\ & - \frac{1}{2}\sum_{p\in\Lambda^*_+} \left[ p^2+8\pi \frak{a}  - \sqrt{|p|^4 + 16 \pi \frak{a}  p^2} - \frac{(8\pi \frak{a})^2}{2p^2}\right] + \sum_{p \in \L^*_+}  \sqrt{|p|^4 + 16 \frak{a} p^2 } \, a_p^* a_p \\ &+ \cV_N + \delta_{\cM_N} 
\end{split} \end{equation} 
where the error operator $\delta_{\cM_N}$ satisfies a bound similar to (\ref{eq:cE-bd}).
To prove (\ref{eq:ENGP}) and (\ref{eq:excGP}) we need to show matching lower and upper bounds. For the lower bounds, the quartic interaction $\cV_N$ is positive and can be neglected. For the upper bounds, it turns out that the expectation of $\cV_N$ is small, on the appropriate trial states (determined by the quadratic part of (\ref{eq:cMN})). Hence, in (\ref{eq:cMN}) we can completely forget about $\cV_N$. Therefore, the estimates (\ref{eq:ENGP}), (\ref{eq:excGP}) follow easily from (\ref{eq:cMN}) and from the observation that the spectrum of the harmonic oscillator $a_p^* a_p$ consists of all natural numbers. This concludes the sketch of the proof of Theorem \ref{thm:bogo}. 

\vspace{.2cm} 

{\bf Acknowledgements.} The author gratefully acknowledges support from the European Research Council through the ERC Advanced Grant CLaQS. Additionally, he acknowledges partial support from the NCCR SwissMAP and from the Swiss National Science Foundation through the Grant ``Dynamical and energetic properties of Bose-Einstein condensates''.

\end{document}